\title{Nonparametric Tests for Treatment Effect Heterogeneity in Observational Studies}
\author{Maozhu Dai, Weining Shen, Hal S. Stern}
\date{}
 \newcommand{\citeg}[1]{\citep[e.g.,][]{#1}}
\theoremstyle{plain}
\newtheorem{theorem}{Theorem}
\begin{document}

\maketitle







\section*{\hfil Abstract \hfil}
We consider the problem of testing for treatment effect heterogeneity in observational studies, and propose a nonparametric test based on multisample U-statistics. To account for potential confounders, we use reweighted data where the weights are determined by estimated propensity scores. The proposed method does not require any parametric assumptions on the outcomes and bypasses the need for modeling the treatment effect for each study subgroup. We establish the asymptotic normality for the test statistic, and demonstrate its superior numerical performance over several competing approaches via simulation studies. Two real data applications including an employment program evaluation study and a mental health study of China's one-child policy are also discussed.

\vspace{6pt}
\noindent
\textit{Keywords}: Causal inference, observational study, reweighting, subgroup analysis, U-statistics.

\def\thefigure{\arabic{figure}}
\def\thetable{\arabic{table}}

\renewcommand{\theequation}{\thesection.\arabic{equation}}

\fontsize{10}{13pt plus.8pt minus .6pt}\selectfont

\section{Introduction}

Understanding treatment effect heterogeneity has attracted a great deal of attention in various research areas, including social sciences \citep{bitler2006mean,feller2009beyond}, health care \citep{kent2016risk,ginsburg2009genomic} and criminology \citep{na2015importance,pate1992formal}. It is now well recognized that ``one size does not fit all" in many disease studies since subjects with different characteristics may respond quite differently to the same treatment. To better account for patient heterogeneity while evaluating the treatment effect and providing accurate personalized treatment recommendation, subgroup analysis \citep{byar1985assessing} has been commonly used to identify subpopulations among subjects and examine the localized treatment effects within subpopulations. In some studies, subjects may be divided into several strata based on baseline characteristics that are expected to be associated with treatment effects, and recommendations are made based on inference conducted within each stratum. However, this procedure is beneficial only if there is enough evidence showing the existence of treatment effect heterogeneity across those strata; otherwise  we simply lose information and statistical power by conducting stratum-specific analysis.

There is an emerging literature on developing hypothesis testing approaches for examining treatment effect heterogeneity \citeg{chang2015nonparametric, ding2016randomization,hsu2017consistent} under different definitions of heterogeneity and different modeling assumptions. In this paper we focus on testing whether the average treatment effects across multiple pre-specified subpopulations are identical to each other. The earliest work towards this goal was the likelihood ratio test (LRT) developed by \cite{gail1985testing} under  normality assumptions for the stratum-specific treatment effect estimates. Regression methods have also been considered, where the heterogeneity of treatment effects is tested by examining interaction terms between treatment assignment and potential effect modifiers \citep{krishnan2003smoking}. More recently, several nonparametric approaches have been proposed in the literature. \cite{crump2008nonparametric} proposed a test based on sieve estimation for treatment effects. This method was later generalized by \cite{sant2016nonparametric} to test for heterogeneity in duration outcomes under endogenous treatment assignment. More recently, \cite{dai2020u} proposed a U-statistic-based test (U test) which does not require estimating stratum-specific treatment effects. 
Compared to the LRT and other parametric tests, the nonparametric tests in general require weaker modeling assumptions on the outcome distributions. However, they still either require specifying a model for estimating the treatment effects \citep{crump2008nonparametric,sant2016nonparametric}, or only consider situations where baseline covariates are well balanced within each stratum \citep{dai2020u}. Motivated by these observations, we propose a nonparametric test that bypasses the need for estimating treatment effects while still being applicable to observational studies where there exist confounding variables that need to be addressed.

In this paper, we focus on testing the equality of the average treatment effects across multiple strata while adjusting for potential confounding variables in observational studies. We propose a new testing procedure based on an adjusted four-sample U-statistic that can be viewed as a weighted version of the original U-statistic developed by \cite{dai2020u}. Assuming the strata are mutually independent, the main idea is to first construct an adjusted U-statistic for comparing the treatment effects between two strata, and then formulate an overall test statistic as a function of those pairwise adjusted U-statistics. For each stratum, the weights in the adjusted U-statistic are carefully chosen by covariate matching and propensity score estimation \citep{li2018balancing} such that the baseline covariate distributions for both the treatment and control groups are the same as the marginal distribution for the target population. To derive the asymptotic distribution for the proposed test, we find the main challenge is that our adjusted U-statistic no longer belongs to the generalized U-statistic family, therefore classical projection theory is not directly applicable. To solve this problem, we use the idea in \cite{satten2018multisample}, which studies 
adjusted two-sample U-statistics, to obtain an asymptotic normality result.  Based on the derived asymptotic theory, we then conduct several numerical studies to compare the performance of our proposed test with that of the LRT \citep{gail1985testing} and the unadjusted U test \citep{dai2020u}. Numerical results confirm the excellent operating characteristics for the proposed method even under propensity score model misspecification, and also clearly demonstrate the advantage of our method over the LRT and the unadjusted U test when the data is generated from a non-Gaussian distribution or the baseline covariates are not well balanced.

The remainder of the paper is structured as follows. In Section~\ref{sec:background}, we provide a review of the U test that assesses  treatment effect heterogeneity across strata with balanced baseline covariates. In Section~\ref{sec:method}, we introduce our adjusted U test for treatment effect heterogeneity that allows for the existence of confounding variables. In Section~\ref{sec:Simulation}, we conduct  simulation studies to demonstrate the asymptotic validity and efficiency of the adjusted U test, and also explore the impact of model misspecification. In Section~\ref{sec:caseStudy}, we further demonstrate the use of our method by two case studies, including an employment program evaluation study in labor economics, and another study on the evaluation of China's one-child policy on children's mental health. We conclude with some remarks in Section~\ref{seq:discussion}. Some additional plots and tables can be found in Supplementary Material.

\section{Review of Unadjusted U-Statistic-Based Test for Treatment Effect Heterogeneity}
\label{sec:background}
\cite{dai2020u} (hereafter DS) proposed a U-statistic-based test (U test) to assess the consistency of average treatment effects in several independent strata, assuming there are no confounding variables. Compared to its parametric counterpart, the Likelihood Ratio Test (LRT) introduced by \cite{gail1985testing}, their proposed U test can have a significant improvement in power especially when the outcomes are deviating far away from a normal distribution. Since the method we propose in this paper is based on their U test, we start with a review of their method.

Assume there are $S$ strata. Within each stratum $s$ $(s\in \{1,...,S\})$, let $\tau_s$  be the additive treatment effect,  $Y_s^t = \{Y_{si}^t, i = 1,...,n_s^t\}$ be the outcomes of subjects in the treatment group, and $Y_s^c = \{Y_{si}^c, i = 1,...,n_s^c\}$ be the outcomes of subjects in the control group. The total sample size across all strata is denoted as $N = \sum\limits_{s =1}^S(n_s^t+n_s^c)$. Two assumptions are made in DS: (1) the outcomes $(Y_1^t, \cdots, Y_S^t, Y_1^c, \cdots, Y_S^c)$ are mutually independent; and (2) there exist positive constants $0<\lambda_s^\omega<1$ for every $s\in\{1,...,S\}$ and $\omega \in \{t,c\}$ such that $\frac{n_s^\omega}{N}\rightarrow \lambda_s^\omega$ as $N \rightarrow \infty$.

To test for treatment effect heterogeneity across all strata, DS considers the null hypothesis that the difference in the potential outcomes follow the same distribution across all strata, i.e., $H_0$: $Y_s^t-Y_s^c$ are identically distributed for every $s \in\{1, \cdots, S\}$, and the alternative hypothesis is that at least two of those distributions are not the same. When $Y_s^t-Y_s^c$ $(s = 1,\cdots, S)$ follow a common distribution up to a location shift, or $Y^t_s$ and $Y^c_s$ follow the same distribution up to a stratum-specific location shift within each stratum $s$ $(s = 1, \cdots, S)$, the hypotheses are equivalent to $H_0: \tau_1 = ... = \tau_S$ versus $H_a:$ at least two of them are not equal, where $\tau_s = \text{E}(Y_s^t) - \text{E}(Y_s^c)$. More discussions about these hypotheses can be found in Section 3.3 of DS. The test statistic is constructed by combining all pairwise U-statistics that compare treatment effects in two strata. To compare the treatment effects in the first two strata,  a four-sample U-statistic is constructed as 
\begin{equation}
     U^{(1,2)} = \frac{1}{n_{1}^tn_{1}^cn_{2}^tn_{2}^c}\sum_{i=1}^{n_{1}^t}\sum_{j=1}^{n_{1}^c}\sum_{k=1}^{n_{2}^t}\sum_{l=1}^{n_{2}^c} \phi^{(1,2)}(i,j,k,l),
     \label{eq:U12}
\end{equation}
where the kernel function $\phi^{(1,2)}(i,j,k,l) = I(Y_{1i}^t-Y_{1j}^c<Y_{2k}^t-Y_{2l}^c)+\frac{1}{2}I(Y_{1i}^t-Y_{1j}^c=Y_{2k}^t-Y_{2l}^c)$. The latter term is used to account for possible ties for discrete distributions. Although DS focuses on additive treatment effect, other forms of treatment effects, such as the ratio of outcomes between different treatment groups, can also be incorporated. DS shows that
\begin{equation}
       \sqrt{N}(U^{(1,2)}-\theta^{(1,2)}) \stackrel{D}{\longrightarrow} \mathcal{N}(0,\sigma^2_{1,2}), ~~~~ \text{when } N\rightarrow \infty,
\end{equation}
where $\sigma^2_{1,2} = \frac{1}{\lambda_1^t}\text{Var}(h_1^{t,(1,2)}(Y_1^t))+
\frac{1}{\lambda_1^c}\text{Var}(h_1^{c,(1,2)}(Y_1^c))+\frac{1}{\lambda_2^t}\text{Var}(h_2^{t,(1,2)}(Y_2^t))+
\frac{1}{\lambda_2^c}\text{Var}(h_2^{c,(1,2)}(Y_2^c))$ is the asymptotic variance of $\sqrt{N}U^{(1,2)}$, and $h_s^{\omega,(1,2)}(x) = \text{E}[\phi^{(1,2)}(1,1,1,1)|Y_{s1}^\omega = x]-\theta^{(1,2)}$ for $s \in\{1,2\}$ and $ \omega\in\{t,c\}$. Under the null hypothesis that the difference of potential outcomes are identically distributed across strata, the expectation of $\phi^{(1,2)}(i,j,k,l)$ is $\frac{1}{2}$, thus $\theta^{(1,2)} \stackrel{\Delta}{=} E(U^{(1,2)})$ is also $\frac{1}{2}$.

With $S$ strata, all pairwise U-statistics $U^{(p,q)}$ $(1\leq p < q \leq S)$ can be constructed in the exactly same way. Specifically, for every pair of $(p,q)$, we can define $U^{(p,q)}$, $\theta^{(p,q)}$ and $h_s^{\omega,(p,q)}$ $(\omega\in\{t,w\}, s\in\{p,q\})$ similarly with $U^{(1,2)}$, $\theta^{(1,2)}$ and $h_s^{\omega,(1,2)}$ by replacing $(1,2)$ with $(p,q)$. Under the assumption that $\frac{n_s^\omega}{N}\rightarrow \lambda_s^\omega$ $(0<\lambda_s^\omega<1)$ as $N\rightarrow\infty$ for $s\in\{1,\cdots,S\}$ and $\omega\in\{t,c\}$, DS shows that
\begin{equation}
\sqrt{N} (U^{(1,2)}-\theta^{(1,2)}, U^{(1,3)}-\theta^{(1,3)}, \cdots, U^{(S-1,S)}-\theta^{(S-1,S)})^T \stackrel{D}{\longrightarrow} \mathcal{N}(0, \Sigma), ~~~~ \text{when } N\rightarrow \infty,
\end{equation}
where $\Sigma = \frac{1}{\lambda_1^t}\Sigma_1^t+\frac{1}{\lambda_1^c}\Sigma_1^c+\dots+\frac{1}{\lambda_S^t}\Sigma_S^t+\frac{1}{\lambda_S^c}\Sigma_S^c$ and $\Sigma_s^\omega$ is the covariance matrix of\\ $\left(\Tilde{h}_s^{\omega,(1,2)}(Y_s^\omega),\Tilde{h}_s^{\omega,(1,3)}(Y_s^\omega),\cdots,\Tilde{h}_s^{\omega,(S-1,S)}(Y_s^\omega)\right)$ for all $s\in\{1,
\cdots,S\}$ and $\omega \in \{t,c\}$. Here $\Tilde{h}_s^{\omega,(p,q)}(x) = h_s^{\omega,(p,q)}(x) I(s = p \text{ or } s = q)$.

To apply this method, $\Sigma$ is estimated by a weighted average of $\Sigma_s^\omega$ $(s\in\{1,...,S\}, \omega\in\{t,c\})$, and $\Sigma_s^\omega$ can be estimated by the corresponding sample covariance matrix. As $\Tilde{h}$ terms are unknown, they need to be estimated as well. Though $h_s^{\omega,(p,q)}(x) = \text{E}[\phi^{(p,q)}(i,j,k,l)|Y_s^\omega = x] - \theta^{(p,q)}$, the constant term $\theta^{(p,q)}$ can be ignored when calculating the covariance matrices.
So they take the method-of-moment estimator for the expectation term $\text{E}[\phi^{(p,q)}(i,j,k,l)|Y_s^\omega = x]$ as the estimator of $h_s^{\omega,(p,q)}(x)$. For instance, the estimator of $h_1^{t,(1,2)}(x)$ is $\hat{h}_1^{t,(1,2)}(x) = \frac{1}{n_{1}^cn_{2}^tn_{2}^c}\sum\limits_{j=1}^{n_{1}^c}\sum\limits_{k=1}^{n_{2}^t}\sum\limits_{l=1}^{n_{2}^c}I(x-Y_{1j}^c<Y_{2k}^t-Y_{2l}^c)$. Similar calculation is repeated for all other $h$ functions, and then used for computing the sample covariance $\hat{\Sigma}_s^\omega$ $(s\in\{1,\cdots,S\}, \omega\in\{t,c\})$, which leads to the final estimator of $\Sigma$ as $\hat{\Sigma} = \frac{1}{\lambda_1^t}\hat{\Sigma}_1^t+\frac{1}{\lambda_1^c}\hat{\Sigma}_1^c+\dots+\frac{1}{\lambda_S^t}\hat{\Sigma}_S^t+\frac{1}{\lambda_S^c}\hat{\Sigma}_S^c$.


To test the null hypothesis $H_0: \theta = \frac{1}{2}\mathbf{1}_{S(S-1)/2}$, where $\theta = (\theta^{(1,2)},\theta^{(1,3)},...,\theta^{(S-1,S)})^T$, DS focuses on a one-dimensional overall test statistic $U_h = N\cdot \sum\limits_{1\leq p<q\leq S}(U^{(p,q)}-\frac{1}{2})^2$. Though the asymptotic reference distribution of $U_h$ does not have an analytic form, it can be approximated by simulation, that is, after generating a large number of independent samples $\{r_1,\cdots, r_L\}$ from $\mathcal{N}(0, \hat{\Sigma})$, the empirical distribution of $\{||r_1||^2,\cdots, ||r_L||^2\}$ approximates the asymptotic reference distribution of $U_h$.


\section{Adjusted U Test of Treatment Effect Heterogeneity}
\label{sec:method}
The test described in Section~\ref{sec:background} can only be used in situations where all baseline covariates are well balanced between different treatment groups in each stratum, e.g., stratified randomized experiments. In observational studies, directly applying that method may lead to misleading conclusions due to the  existence of potential confounding variables. Even in the situations where the strata are constructed based on propensity scores, which is the probability of getting treatment \citep{rosenbaum1983central}, in hope of balancing baseline covariates \citep{xie2012estimating}, there may remain imbalance that needs to be adjusted. So in this paper, we propose an approach that extends the U test reviewed in Section~\ref{sec:background} to be applicable to situations with unbalanced baseline covariates.

\subsection{Notation and setup}
\label{sec:notations}
We introduce some additional notations here. For each stratum $s$, where $s\in\{1,...,S\}$, we use $X_s^t = \{X_{si}^t,i = 1,...,n_s^t\}$ to denote the collection of baseline covariates for subjects in the treatment group where the first element of each vector $X_{si}^t$ is 1, corresponding to an intercept term. Similarly $X_s^c = \{X_{si}^c,i = 1,...,n_s^c\}$ is used to denote the covariates for subjects in the control group. Let 
$X_s = X_s^t \cup X_s^c$ be the collection of covariates for all subjects in stratum $s$, where we assume the first $n_s^t$ subjects are from the treatment group, and the rest are from the control. We use $T_s = \{T_{si}, i = 1,...,n_s\}$ to denote the indicators of treatment, i.e., the first $n_s^t$ elements are 1's and the rest are 0's. The within-stratum propensity score,  $P(T_s = 1 | X_s)$, is denoted by $e(X_s) = \{e(X_{si}),i = 1,...,n_s\}$. Similarly, $e(X_s^t) = \{e(X_{si}^t), i = 1,...,n_s^t\}$ denotes the first $n_s^t$ elements in $e(X_s)$ and $e(X_s^c) = \{e(X_{si}^c), i = 1,...,n_s^c\}$ denotes the rest. We assume $0<e(X_s)<1$ for all $s\in\{1,\cdots, S\}$.

\subsection{Balancing baseline covariates within one stratum}
To balance confounding variables, one way is to weight the subjects such that within each stratum all baseline covariates from the two treatment groups have the same distributions. As we assume the strata are mutually independent, here we only focus on how to balance the covariates in one stratum, and the same process can be applied to the others. For simplicity, here we omit the stratum indicator $s$ in the subscript.  In one stratum, for baseline covariate $X$, let its marginal density function (or probability mass function if $X$ is discrete) be $f(x)$, and its conditional density functions (or probability mass functions) in the treatment and control groups be $f^t(x)$ and $f^c(x)$, respectively. Our goal is to find weight functions, $w^t(x)$ and $w^c(x)$, in the treatment and control group such that $f^t(x)w^t(x) = f^c(x)w^c(x)$.  As discussed in \cite{li2018balancing}, different choices of weight functions will lead to different target populations of interest. They propose to use a general function $h(x)$ to define the population of interest with $h(x) f(x)$ as its marginal distribution. For example, when $h(x) = 1$, the target population has a marginal distribution of $f(x)$, which corresponds to the distribution of $X$ in the combined population of treatment and control groups. When $h(x)$ is $e(x)$ or $1-e(x)$, the target population refers to the subjects in the treatment or control groups. And when $h(x) = e(x)(1-e(x))$, the target population is the so-called overlap population \citep{li2018balancing}.

For a given $h(x)$, the weight functions $w^t(x)$ and $w^c(x)$ should satisfy
\begin{equation}
    w^t(x)f^t(x) \propto w^c(x)f^c(x) \propto f(x)h(x).
    \label{eq:requiredWeight}
\end{equation}
Since $f^t(x) \propto f(x)e(x)$ and $f^c(x) \propto f(x)(1-e(x))$, (\ref{eq:requiredWeight}) implies
\begin{equation}
    w^t(x)\propto \frac{h(x)}{e(x)},  ~~\text{and}~~w^c(x)\propto \frac{h(x)}{1-e(x)}.
    \label{eq:weights}
\end{equation}
When $h(x) = 1$, the induced weight functions yield the classical inverse probability weighting  \citep{horvitz1952generalization}.

The aforementioned weighting method can be incorporated in U-statistics as well. For example, \cite{satten2018multisample} adopted it to adjust two-sample U-statistics with the goal of testing for the existence of treatment effect in observational studies. For our study, we also use this method to adjust the pairwise U-statistics introduced in Section~\ref{sec:background} in order to test for treatment effect heterogeneity in observational studies. We take $U^{(1,2)}$ in equation~(\ref{eq:U12}) as an example, which is the average of several kernel functions. Each kernel function $\phi^{(1,2)}(i,j,k,l)$ is constructed by the outcomes of four independent subjects, and each subject needs to be weighted. Since the outcomes are mutually independent, $\phi^{(1,2)}(i,j,k,l)$ should be weighted by the product of the weights for the four subjects, i.e., $w^t(X_{1i}) \cdot w^c(X_{1j}) \cdot w^t(X_{2k}) \cdot w^c(X_{2l})$.

The choice of the weight functions depends on $h(x)$, which in principle can be chosen as any positive function. However, we further require $h(x)$ to be a constant or a function of $e(x)$, and we require it to be differentiable with respect to $e(x)$. These requirements will later greatly help with the efficient estimation of the asymptotic reference distribution for the adjusted U-statistics without requiring approximation/sampling methods such as bootstrap. In general, the choice of $h(x)$ is flexible. For example, in our simulation study in Section~\ref{sec:Simulation} and the application study on only children's mental health in Section~\ref{sec:casestudy_onechild} , we focus on $h(x) = 1$. In the employment program evaluation study in Section \ref{sec:caseStudy_labor}, we choose $h(x) = e(x)$.

In practice, the propensity scores are unknown, and it is common to use a logistic regression model between treatment indicators and associated covariates $X_s$ for their estimation. Formally, within stratum $s$ $(s\in\{1,\cdots,S\})$, we consider the following model with parameter $\beta_s$,
\begin{equation}
    \log \left(\frac{e(X_{si})}{1-e(X_{si})}\right) = \beta_s^{T}X_{si}, ~~i = 1,...,n_s.
    \label{eq:LogisticModel}
\end{equation}{}
Note the model specification here is flexible and can be extended to include quadratic (or nonlinear) functions of $X_s$ and interaction terms as needed. The model does not impose any assumptions on the response variable, and in practice it is convenient to conduct model diagnostics for \eqref{eq:LogisticModel} based on \citet{austin2008goodness}.  
The estimate of $\beta_s$, denoted by $\hat{\beta}_s$, can be obtained by solving the estimating equation of logistic regression, denoted as  $\sum_{j = 1}^{n_s} S_{sj}(\hat{\beta}_s) = 0$.

As the propensity scores are functions of $\beta_s$, for simplicity, we denote the weights for subjects in the treatment and control groups by $w_{si}^t(\beta_s)$ $(i = 1,...,n_s^t)$ and $w_{si}^c(\beta_s)$ $(i = 1,...,n_s^c)$, respectively for $s\in\{1, \cdots, S\}$. In practice, these weights can be estimated by their plug-in estimates.

\subsection{Testing treatment effect heterogeneity between two strata}

We start with constructing a test statistic that compares the treatment effects between the first two strata. After weighting, the U-statistic in (\ref{eq:U12}) becomes
\begin{align}
 U_a^{(1,2)} &= \frac{\sum_{i=1}^{n_{1}^t}\sum_{j=1}^{n_{1}^c}\sum_{k=1}^{n_{2}^t}\sum_{l=1}^{n_{2}^c} w_{1i}^t(\hat{\beta}_1)w_{1j}^c(\hat{\beta}_1)w_{2k}^t(\hat{\beta}_2)w_{2l}^c(\hat{\beta}_2)\phi^{(1,2)}(i,j,k,l)}{\sum_{i=1}^{n_{1}^t}w_{1i}^t(\hat{\beta}_1) \cdot \sum_{j=1}^{n_{1}^c}w_{1j}^c(\hat{\beta}_1) \cdot
 \sum_{k=1}^{n_{2}^t}w_{2k}^t(\hat{\beta}_2) \cdot
 \sum_{l=1}^{n_{2}^c}w_{2l}^c(\hat{\beta}_2)}
 \nonumber\\
 & = \frac{1}{n_{1}^tn_{1}^cn_{2}^tn_{2}^c}\sum_{i=1}^{n_{1}^t}\sum_{j=1}^{n_{1}^c}\sum_{k=1}^{n_{2}^t}\sum_{l=1}^{n_{2}^c} \frac{ w_{1i}^t(\hat{\beta}_2)w_{1j}^c(\hat{\beta}_1)w_{2k}^t(\hat{\beta}_2)w_{2l}^c(\hat{\beta}_1)\phi^{(1,2)}(i,j,k,l)}{\Bar{w}_1^t(\hat{\beta}_1)\Bar{w}_1^c(\hat{\beta}_1)\Bar{w}_2^t(\hat{\beta}_2)\Bar{w}_2^c(\hat{\beta}_2)},
 \label{eq:Ua12}
\end{align}
where $\Bar{w}_s^\omega(\hat{\beta}_s) = \frac{1}{n_s^\omega}\sum\limits_{i = 1}^{n_s^\omega}  w_{si}^\omega(\hat{\beta}_s)$ for $s\in\{1, \cdots, S\}$ and $\omega \in \{t,c\}$.

Though $U_a^{(1,2)}$ looks like a generalized U-statistic \citep{korolyuk2013theory}, unfortunately it is not, because $\hat{\beta}_1$ and $\hat{\beta}_2$ are functions of all outcomes in the corresponding strata. Therefore the classical projection theorem cannot be directly applied to $U_a^{(1,2)}$. The key observation is that, if we replace $\hat{\beta}_1$ and $\hat{\beta}_2$ by their estimands, $\beta_1$ and $\beta_2$, then we obtain a generalized U-statistic. Moreover, if $\hat{\beta}_1$ and $\hat{\beta}_2$ are consistent estimates, we would expect the asymptotic properties (e.g., normality) of the generalized U-statistics will still hold for our adjusted U-statistic. This is indeed the case by the following theorem. The proof is based on the idea in \cite{satten2018multisample} where they derived the asymptotic normality for adjusted two-sample U-statistics.

\begin{theorem}
Suppose that $\hat{\beta}_1$ and $\hat{\beta}_2$ are consistent estimators for $\beta_1$ and $\beta_2$ and assume that (1) the outcomes $(Y_1^t, Y_2^t, Y_1^c, Y_2^c)$ are mutually independent; (2) there exist positive constants $0<\lambda_s^\omega<1$ for every $s\in\{1,2\}$ and $\omega \in \{t,c\}$ such that $\frac{n_s^\omega}{n_1+n_2}\rightarrow \lambda_s^\omega$ as $n_1+n_2 \rightarrow \infty$ and (3) $0<e(X_s)<1$ for all $s\in\{1,2\}$ where $e(X_s)$ is defined in Section \ref{sec:notations} . Then as $n_1+n_2\rightarrow \infty$, we have
\begin{equation}
    \sqrt{(n_1+n_2)}(U_a^{(1,2)}-\theta_a^{(1,2)}) \stackrel{\mathcal{D}}{\longrightarrow }\mathcal{N}(0, \sigma_{1,2}^2),
    \label{eq:Ua12Asym}
\end{equation}
where $\theta_a^{(1,2)} = \lim_{n_1+n_2\rightarrow\infty}E[U_a^{(1,2)}]$ and \\ $\sigma^2_{1,2} = \lim_{n_1+n_2\rightarrow\infty}\left\{ n_1^t\text{Var}[\eta_1^{t,(1,2)}(Y_1^t)] +  n_1^c\text{Var}[\eta_1^{c,(1,2)}(Y_1^c)] + n_2^t\text{Var}[\eta_2^{t,(1,2)}(Y_2^t)] + n_2^c\text{Var}[\eta_2^{c,(1,2)}(Y_2^c)]\right\}$ is the asymptotic variance of $\sqrt{(n_1+n_2)}U_a^{(1,2)}$, and the $\eta$ functions are defined in the proof below.
\label{thm:Ua12Asymptotics}
\end{theorem}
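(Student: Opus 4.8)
The plan is to peel $U_a^{(1,2)}$ apart into a genuine four-sample U-statistic plus a correction coming from the estimation of the propensity-score parameters, and then combine the Hájek projections of the two pieces. Write $Z_{si}^\omega=(Y_{si}^\omega,X_{si}^\omega)$ for the observed data of subject $i$ in group $\omega$ of stratum $s$, and let $U_a^{(1,2)}(\beta_1,\beta_2)$ denote the \emph{oracle} statistic obtained from \eqref{eq:Ua12} by replacing $\hat\beta_1,\hat\beta_2$ with the true parameters. Since the kernel $\phi^{(1,2)}$ does not involve $\beta$ at all and the weights in \eqref{eq:weights} are, at the true $\beta$, fixed functions of the covariates, $U_a^{(1,2)}(\beta_1,\beta_2)$ is — after replacing its random normalizer $\bar w_1^t(\beta_1)\bar w_1^c(\beta_1)\bar w_2^t(\beta_2)\bar w_2^c(\beta_2)$ by its probability limit via Slutsky and a law of large numbers — an ordinary generalized four-sample U-statistic in the i.i.d.\ observations $Z_{si}^\omega$. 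Classical projection theory then supplies its Hájek projection $\sum_{s\in\{1,2\}}\sum_{\omega\in\{t,c\}}\sum_i g_s^{\omega,(1,2)}(Z_{si}^\omega)$, where $g_s^{\omega,(1,2)}$ is the centered conditional expectation of the weighted kernel given one coordinate, together with the fact that the higher-order terms of the Hoeffding decomposition are $o_p((n_1+n_2)^{-1/2})$.

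Next I would handle the nuisance. Standard M-estimation theory for the logistic model \eqref{eq:LogisticModel} gives the asymptotically linear expansion $\hat\beta_s-\beta_s=n_s^{-1}\sum_{j=1}^{n_s}\mathcal{I}_s^{-1}S_{sj}(\beta_s)+o_p(n_s^{-1/2})$, with $\mathcal{I}_s$ the Fisher information and $S_{sj}$ the score appearing in $\sum_{j}S_{sj}(\hat\beta_s)=0$. A Taylor expansion of the map $(\beta_1,\beta_2)\mapsto U_a^{(1,2)}(\beta_1,\beta_2)$ about the truth yields
\[
U_a^{(1,2)}(\hat\beta_1,\hat\beta_2)=U_a^{(1,2)}(\beta_1,\beta_2)+\sum_{s=1}^{2}\Big(\partial_{\beta_s}U_a^{(1,2)}\big|_{(\beta_1,\beta_2)}\Big)^{T}(\hat\beta_s-\beta_s)+R_N .
\]
The point is that in \eqref{eq:Ua12} the $\beta$-dependence sits only in the smooth weights, which by the overlap condition $0<e(X_s)<1$ and the requirement that $h$ be differentiable in $e$ have bounded first and second $\beta$-derivatives; hence each $\partial_{\beta_s}U_a^{(1,2)}$ is an average of a bounded kernel and, by a law of large numbers for U-statistics, converges in probability to a deterministic vector $c_s$, while $R_N=O_p(\|\hat\beta_1-\beta_1\|^2+\|\hat\beta_2-\beta_2\|^2)=o_p((n_1+n_2)^{-1/2})$. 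Substituting the linear expansion of $\hat\beta_s$ turns the middle term into $n_s^{-1}\sum_{j=1}^{n_s}c_s^{T}\mathcal{I}_s^{-1}S_{sj}(\beta_s)+o_p(n_s^{-1/2})$, an average of i.i.d.\ mean-zero summands attached to the subjects of stratum $s$, which I split into its treatment-group and control-group parts.

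Combining the two pieces, I would \emph{define} $\eta_s^{\omega,(1,2)}$ as the per-subject influence contribution obtained by adding the Hájek-projection summand $g_s^{\omega,(1,2)}$ to the relevant part of the score-correction term for the $(s,\omega)$ sample (and absorbing the first-order effect of linearizing the random normalizer and of centering at $\theta_a^{(1,2)}=\lim E[U_a^{(1,2)}]$). Each $\eta_s^{\omega,(1,2)}$ has mean zero, and, because the four samples $Y_1^t,Y_1^c,Y_2^t,Y_2^c$ are mutually independent, $\sqrt{n_1+n_2}\,(U_a^{(1,2)}-\theta_a^{(1,2)})$ is, up to $o_p(1)$, a sum of four independent, centered, normalized i.i.d.\ averages. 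The Lindeberg–Lévy CLT applied to each summand, together with Slutsky and the ratios $n_s^\omega/(n_1+n_2)\to\lambda_s^\omega$, then delivers \eqref{eq:Ua12Asym} with the asymptotic variance equal to the stated sum of the four $\mathrm{Var}$ terms.

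The step I expect to be the main obstacle is making this plug-in rigorous: establishing the (uniform) law of large numbers that forces $\partial_{\beta_s}U_a^{(1,2)}$ to a deterministic limit and the second-order remainder $R_N$ to be $o_p((n_1+n_2)^{-1/2})$ — this is precisely where the smoothness-in-$e$ restriction on $h$ and the overlap condition $0<e(X_s)<1$ are essential, since they bound the weight derivatives and keep the relevant kernels bounded — and, secondarily, the bookkeeping needed to linearize the ratio structure of $U_a^{(1,2)}$ (numerator against the product of the four sample weight-averages) and to confirm that replacing $\hat\beta$ by $\beta$ perturbs $E[U_a^{(1,2)}]$ only at order $O((n_1+n_2)^{-1})$, so that the centering by $\theta_a^{(1,2)}$ is legitimate.
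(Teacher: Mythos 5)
Your proposal is correct and follows essentially the same route as the paper: a delta-method linearization of the ratio (keeping the first-order fluctuation of the weight normalizers), the score-equation expansion of $\hat\beta_s-\beta_s$, the classical projection theorem applied to the oracle generalized U-statistic with the weights evaluated at the true $\beta$'s, and a collection of all first-order terms into per-subject influence functions $\eta_s^{\omega,(1,2)}$ before invoking the CLT. The only presentational difference is that the paper organizes the bookkeeping via an explicit five-variable Taylor expansion with named constants $c_{1s}^\omega, c_2, c_{3s}^\omega, c_{4s}$, which is exactly the "obstacle" step you flag.
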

\begin{proof}
We prove the asymptotic normality of the adjusted U-statistic $U_a^{(1,2)}$ in (\ref{eq:Ua12}) via approximating by four independent sets of $i.i.d.$ random variables. The asymptotic normality then holds by the Central Limit Theorem. This can be directly generalized to any $U_a^{(p,q)}$ with $1\leq p< q\leq S$. For simplicity, we omit the superscript $(1,2)$ in the following proof and use $\approx$ to denote the equalities up to $o_p(n^{-\frac{1}{2}})$, where $n = n_1+n_2$. Some of the notations we use here are similar to what \cite{satten2018multisample} used in their appendix section. Throughout the proof, we use $\text{plim}$ to denote the limit under convergence in probability. 

We set $\theta^* = \frac{1}{n_1^tn_1^cn_2^tn_2^c}E[\sum\limits_{i=1}^{n_{1}^t}\sum\limits_{j=1}^{n_{1}^c}\sum\limits_{k=1}^{n_{2}^t}\sum\limits_{l=1}^{n_{2}^c}w_{1i}^t(\beta_1)w_{1j}^c(\beta_1)w_{2k}^t(\beta_2)w_{2l}^c(\beta_2)\phi(i,j,k,l)]$,\\ $\theta_s^\omega = \text{plim}~\Bar{w}_s^\omega(\beta_s) = \text{plim}~ \frac{1}{n_s^\omega}\sum\limits_{j = 1}^{n_s^\omega}w_{sj}^t(\beta_s)$ ($s\in\{1,2\}$,  $\omega \in \{t,c\}$), and $\theta_a = \frac{\theta^*}{\theta_1^t\theta_1^c\theta_2^t\theta_2^c}$. By first-order Taylor expansion in four variables at $(\theta^*, \theta_1^t, \theta_1^c, \theta_2^t, \theta_2^c)$, we have
\begin{align}
    U_a - \theta_a = & \frac{1}{n_{1}^tn_{1}^cn_{2}^tn_{2}^c}\sum_{i=1}^{n_{1}^t}\sum_{j=1}^{n_{1}^c}\sum_{k=1}^{n_{2}^t}\sum_{l=1}^{n_{2}^c} \frac{ w_{1i}^t(\hat{\beta}_1)w_{1j}^c(\hat{\beta}_1)w_{2k}^t(\hat{\beta}_2)w_{2l}^c(\hat{\beta}_2)\phi(i,j,k,l)}{\Bar{w}_1^t(\hat{\beta}_1)\Bar{w}_1^c(\hat{\beta}_1)\Bar{w}_2^t(\hat{\beta}_2)\Bar{w}_2^c(\hat{\beta}_2)} - \frac{\theta^*}{\theta_1^t\theta_1^c\theta_2^t\theta_2^c} \nonumber\\
     \approx & c_{11}^t (\Bar{w}_1^t(\hat{\beta}_1) - \theta_1^t) +  c_{11}^c (\Bar{w}_1^c(\hat{\beta}_1) - \theta_1^c) + 
     c_{12}^t (\Bar{w}_2^t(\hat{\beta}_2) - \theta_2^t) + 
      c_{12}^c (\Bar{w}_2^c(\hat{\beta}_2) - \theta_2^c) \nonumber \\
      & +c_2[\frac{1}{n_{1}^tn_{1}^cn_{2}^tn_{2}^c}\sum_{i=1}^{n_{1}^t}\sum_{j=1}^{n_{1}^c}\sum_{k=1}^{n_{2}^t}\sum_{l=1}^{n_{2}^c}w_{1i}^t(\hat{\beta}_1)w_{1j}^c(\hat{\beta}_1)w_{2k}^t(\hat{\beta}_2)w_{2l}^c(\hat{\beta}_2)\phi(i,j,k,l) - \theta^*] 
      \label{eq:AppendixGeneralExpan}
\end{align}{}
where $c_{1s}^\omega = -\frac{\theta_a}{\theta_s^\omega}$ for $s\in \{1,2\}$ and $\omega \in \{t,c\}$, $c_2 = \frac{1}{\theta_1^t\theta_1^c\theta_2^t\theta_2^c}$.

Then by first-order Taylor expansion again, we have
\begin{align}
    \Bar{w}_s^\omega(\hat{\beta}_s) - \theta_s^\omega &= \frac{1}{n_s^\omega}\sum\limits_{i = 1}^{n_s^\omega}w_{si}^\omega(\hat{\beta}_s) - \theta_s^\omega \nonumber \\
     &= \frac{1}{n_s^\omega}\sum\limits_{i = 1}^{n_s^\omega}w_{si}^\omega(\hat{\beta}_s) - \frac{1}{n_s^\omega}\sum\limits_{i = 1}^{n_s^\omega}w_{si}^\omega(\beta_s) + \frac{1}{n_s^\omega}\sum\limits_{i = 1}^{n_s^\omega}w_{si}^\omega(\beta_s) - \theta_s^\omega \nonumber \\
     & \approx c_{3s}^\omega(\hat{\beta}_s-\beta_s) + \frac{1}{n_s^\omega}\sum\limits_{i = 1}^{n_s^\omega}w_{si}^\omega(\beta_s) - \theta_s^\omega~~~~\text{for } s = 1,2; \omega = t,c
\end{align}
where $c_{3s}^\omega = \text{plim}~\frac{1}{n_s^\omega}\sum\limits_{i = 1}^{n_s^\omega}\frac{\partial w_{si}^\omega(\beta_s)}{\partial \beta_s}$.

As $\hat{\beta}_1$ and $\hat{\beta}_2$ are obtained by solving estimating equations $\sum\limits_{i = 1}^{n_1}S_{1j}(\hat{\beta}_1) = 0$ and $\sum\limits_{i = 1}^{n_2}S_{2j}(\hat{\beta}_2) = 0$ respectively, again via first-order Taylor expansion,
\begin{align}
    \hat{\beta}_s-\beta_s\approx -J_s^{-1}\frac{1}{n_s}
    \sum_{i = 1}^{n_s}S_{si}(\beta_s)~~~~\text{for}~~ s = 1, 2
\end{align}{}
where $J_s = \text{plim} ~ \frac{1}{n_s}\sum\limits_{j = 1}^{n_s}\frac{\partial S_{sj}(\beta_s)}{\partial \beta_s}$.
For the last component of (\ref{eq:AppendixGeneralExpan}), by first-order Taylor expansion in two variable at the point $(\beta_1, \beta_2)$, we have
\begin{align}
    &\frac{1}{n_{1}^tn_{1}^cn_{2}^tn_{2}^c}\sum_{i=1}^{n_{1}^t}\sum_{j=1}^{n_{1}^c}\sum_{k=1}^{n_{2}^t}\sum_{l=1}^{n_{2}^c}w_{1i}^t(\hat{\beta}_1)w_{1j}^c(\hat{\beta}_1)w_{2k}^t(\hat{\beta}_2)w_{2l}^c(\hat{\beta}_2)\phi(i,j,k,l) - \theta^* \nonumber\\
    \approx & c_{41} (\hat{\beta}_1 - \beta_1) + c_{42}(\hat{\beta}_2 - \beta_2) + \nonumber \\
    & +\frac{1}{n_{1}^tn_{1}^cn_{2}^tn_{2}^c}\sum_{i=1}^{n_{1}^t}\sum_{j=1}^{n_{1}^c}\sum_{k=1}^{n_{2}^t}\sum_{l=1}^{n_{2}^c}w_{1i}^t(\beta_1)w_{1j}^c(\beta_1)w_{2k}^t(\beta_2)w_{2l}^c(\beta_2)\phi(i,j,k,l) - \theta^*,
    \label{eq:AppendixUstat}
\end{align}{}
where 
\begin{align}
    c_{4s} = \text{plim}~ \frac{1}{n_{1}^tn_{1}^cn_{2}^tn_{2}^c}\sum\limits_{i=1}^{n_{1}^t}\sum\limits_{j=1}^{n_{1}^c}\sum\limits_{k=1}^{n_{2}^t}\sum\limits_{l=1}^{n_{2}^c} \frac{\partial w_{1i}^t(\beta_1)w_{1j}^c(\beta_1)w_{2k}^t(\beta_2)w_{2l}^c(\beta_2)}{\partial \beta_s}\phi(i,j,k,l),~~ s = 1,2
\end{align}{}
Note in (\ref{eq:AppendixUstat}), $\frac{1}{n_{1}^tn_{1}^cn_{2}^tn_{2}^c}\sum_{i=1}^{n_{1}^t}\sum_{j=1}^{n_{1}^c}\sum_{k=1}^{n_{2}^t}\sum_{l=1}^{n_{2}^c}w_{1i}^t(\beta_1)w_{1j}^c(\beta_1)w_{2k}^t(\beta_2)w_{2l}^c(\beta_2)\phi(i,j,k,l)$ is a 4-sample generalized U-statistic with kernel function $\Tilde{\phi}(i,j,k,l) = w_{1i}^t(\beta_1)w_{1j}^c(\beta_1)w_{2k}^t(\beta_2)w_{2l}^c(\beta_2)\phi(i,j,k,l)$. So by the classical projection theorem \citep{hajek1968asymptotic, van2000asymptotic}, we have
\begin{align*}
    &\frac{1}{n_{1}^tn_{1}^cn_{2}^tn_{2}^c}\sum_{i=1}^{n_{1}^t}\sum_{j=1}^{n_{1}^c}\sum_{k=1}^{n_{2}^t}\sum_{l=1}^{n_{2}^c}w_{1i}^t(\beta_1)w_{1j}^c(\beta_1)w_{2k}^t(\beta_2)w_{2l}^c(\beta_2)\phi(i,j,k,l) - \theta^* \nonumber \\
    \approx & \frac{1}{n_1^t}\sum_{i = 1}^{n_1^t}\Tilde{h}_1^t(Y_{1i}^t) + \frac{1}{n_1^c}\sum_{i = 1}^{n_1^c}\Tilde{h}_1^c(Y_{1i}^c) + \frac{1}{n_2^t}\sum_{i = 1}^{n_2^t}\Tilde{h}_2^t(Y_{2i}^t) + \frac{1}{n_2^c}\sum_{i = 1}^{n_2^c}\Tilde{h}_2^c(Y_{2i}^c)-4\theta^*,
\end{align*}{}
 where $\Tilde{h}_s^\omega(x) = E[\Tilde{\phi}(1,1,1,1)|Y_{s1}^\omega = x]$ for $s\in\{1,2\}$ and $\omega \in\{t,c\}$.
 
 Finally, back to Equation (\ref{eq:AppendixGeneralExpan}), we have
 \begin{align}
     U_a - \theta_a & \approx \frac{c_{11}^t}{n_1^t}\sum_{i =1}^{n_1^t}[w_{1i}^t(\beta_1)-\theta_1^t] + \frac{c_{11}^c}{n_1^c}\sum_{i =1}^{n_1^c}[w_{1i}^c(\beta_1)-\theta_1^c] + \frac{c_{12}^t}{n_2^t}\sum_{i =1}^{n_2^t}[w_{2i}^t(\beta_2)-\theta_2^t] + \frac{c_{12}^c}{n_2^c}\sum_{i =1}^{n_2^c}[w_{2i}^c(\beta_2)-\theta_2^c] \nonumber \\
      & - (c_{11}^tc_{31}^t+c_{11}^cc_{31}^c+c_2c_{41})J_1^{-1}\frac{1}{n_1}\sum_{i = 1}^{n_1}S_{1i}(\beta_1) - (c_{12}^tc_{32}^t+c_{12}^cc_{32}^c+c_2c_{42})J_2^{-1}\frac{1}{n_2}\sum_{i = 1}^{n_2}S_{2i}(\beta_2) \nonumber\\
      & + \frac{c_2}{n_1^t}\sum_{i = 1}^{n_1^t}\Tilde{h}_1^t(Y_{1i}^t)-c_2\theta^* + \frac{c_2}{n_1^c}\sum_{i = 1}^{n_1^c}\Tilde{h}_1^c(Y_{1i}^c)-c_2\theta^* + \frac{c_2}{n_2^t}\sum_{i = 1}^{n_2^t}\Tilde{h}_2^t(Y_{2i}^t)-c_2\theta^* + \frac{c_2}{n_2^c}\sum_{i = 1}^{n_2^c}\Tilde{h}_2^c(Y_{2i}^c)-c_2\theta^*\nonumber \\
      & \stackrel{\Delta}{=} \sum_{i = 1}^{n_1^t} \eta_{1}^t(Y_{1i}^t) + \sum_{i = 1}^{n_1^c}\eta_{1}^c(Y_{1i}^c) + \sum_{i = 1}^{n_2^t} \eta_{2}^t(Y_{2i}^t) + \sum_{i = 1}^{n_2^c}\eta_{2}^c(Y_{2i}^c),
      \label{eq:AppendixEtaSum}
 \end{align}{}
 where
 \begin{align}
     \eta_{1i}^t = \frac{c_{11}^t}{n_1^t}[w_{1i}^t(\beta_1)-\theta_1^t] + \frac{c_2}{n_1^t}[\Tilde{h}_1^t(Y_{1i}^t)-\theta^*] - (c_{11}^tc_{31}^t+c_{11}^cc_{31}^c+c_2c_{41})J_1^{-1}\frac{1}{n_1}S_{1i}^t(\beta_1), \text{ for } i = 1,...,n_1^t, \nonumber\\
     \eta_{1i}^c = \frac{c_{11}^c}{n_1^c}[w_{1i}^c(\beta_1)-\theta_1^c] + \frac{c_2}{n_1^c}[\Tilde{h}_1^c(Y_{1i}^c)-\theta^*] - (c_{11}^tc_{31}^t+c_{11}^cc_{31}^c+c_2c_{41})J_1^{-1}\frac{1}{n_1}S_{1i}^c(\beta_1), \text{ for } i = 1,...,n_1^c, \nonumber\\
     \eta_{2i}^t = \frac{c_{12}^t}{n_2^t}[w_{2i}^t(\beta_2)-\theta_2^t] + \frac{c_2}{n_2^t}[\Tilde{h}_2^t(Y_{2i}^t)-\theta^*] - (c_{12}^tc_{32}^t+c_{12}^cc_{32}^c+c_2c_{42})J_2^{-1}\frac{1}{n_2}S_{2i}^t(\beta_2), \text{ for } i = 1,...,n_2^t, \nonumber\\
     \eta_{2i}^c = \frac{c_{12}^c}{n_2^c}[w_{2i}^c(\beta_2)-\theta_2^c] + \frac{c_2}{n_2^c}[\Tilde{h}_2^c(Y_{2i}^c)-\theta^*] - (c_{12}^tc_{32}^t+c_{12}^cc_{32}^c+c_2c_{42})J_2^{-1}\frac{1}{n_2}S_{2i}^c(\beta_2), \text{ for } i = 1,...,n_2^c.
     \label{eq:Etas}
\end{align}{}
As we always assume in each stratum $s$ $(s\in\{1,2\})$, the first $n_s^t$ subjects are in the treatment group, and the last $n_s^c$ subjects are in the control group, here $\{S_{si}^t, i = 1,...,n_s^t\}$ are the first $n_s^t$ elements of $\{S_{si},i = 1,...,n_s\}$, and $\{S_{si}^c, i = 1,...,n_s^c\}$ are the rest elements of it. Since the expectation of the right hand side of (\ref{eq:AppendixEtaSum}) is 0, the limit expectation of $U_a$ is $\theta_a$. By the Central Limit Theorem, Theorem ~\ref{thm:Ua12Asymptotics} is obtained.
\end{proof}

Theorem~\ref{thm:Ua12Asymptotics} establishes the asymptotic distribution for our proposed adjusted U-statistic. Assumptions (1)--(3) are mild and commonly used in the literature. For example, Assumption (2) requires that within stratum, the proportion of treatment/group is not negligible, which is satisfied in most applications. Assumption (3) requires the propensity score to be bounded away from 0 and 1, which is called probabilistic assignment and is commonly used in the causal inference literature \citep{imbens2015causal}.

To estimate the asymptotic variance $\sigma_{1,2}^2$, we first estimate the $\eta$ values, denoted by $\{\Hat{\eta}_s^{\omega, (1,2)}(Y_{si}^\omega), i = 1, \cdots, n_s^\omega\}$ for $s\in\{1,2\}$, by replacing all $\beta$'s by their consistent estimates and replacing $\title{h}$ functions by their method-of-moment estimators in the same way as discussed in Section~\ref{sec:background}. Then we use the sample variance of each set $\{\Hat{\eta}_s^{\omega, (1,2)}(Y_{si}^\omega), i = 1, \cdots, n_s^\omega\}$ to estimate $\text{Var}[\hat{\eta}_s^{\omega,(1,2)}(Y_s^\omega)]$, i.e., $\widehat{\text{Var}}[\hat{\eta}_s^{\omega,(1,2)}(Y_s^\omega)]=\frac{1}{n_s^\omega-1}\sum\limits_{i = 1}^{n_s^\omega}(\hat{\eta}_{s}^{\omega,(1,2)} (Y_{si}^\omega) - \bar{\hat{\eta}}_{s}^{\omega,(1,2)}(Y_s^\omega))^2$, where $\bar{\hat{\eta}}_{s}^{\omega,(1,2)}(Y_s^\omega)$ is the average of $\{\Hat{\eta}_s^{\omega, (1,2)}(Y_{si}^\omega), i = 1, \cdots, n_s^\omega\}$. Then $\sigma^2_{1,2}$ can be consistently estimated by $\widehat{\sigma}^2_{1,2} =n_1^t\widehat{\text{Var}}[\hat{\eta}_1^{t,(1,2)}(Y_1^t)]+  n_1^c\widehat{\text{Var}}[\hat{\eta}_1^{c,(1,2)}(Y_1^c)] + n_2^t\widehat{\text{Var}}[\hat{\eta}_2^{t,(1,2)}(Y_2^t)] +  n_2^c\widehat{\text{Var}}[\hat{\eta}_2^{c,(1,2)}(Y_2^c)] $. 


\subsection{Testing treatment effect heterogeneity in multiple strata}

Next we consider testing for treatment effect heterogeneity in multiple strata, i.e., $1,2,\cdots,S$, with $S>2$, by extending the adjusted U-statistic in the previous section. For every pair of strata $p$ and $q$ satisfying $1\leq p<q\leq S$, we can define an adjusted U-statistic $U_a^{(p,q)}$ in the same way as $U_a^{(1,2)}$. Then it is natural to consider a vector of all pairwise adjusted U-statistics $U_a = (U_a^{(1,2)}, U_a^{(1,3)},...,U_a^{(S-1,S)})^T$. In the next theorem, we derive its joint asymptotic distribution.

\begin{theorem}
Suppose that Assumptions (1)--(3) in Theorem \ref{thm:Ua12Asymptotics} are satisfied for every stratum. Then as the total sample size $N\rightarrow\infty$,
\begin{equation}
\sqrt{N} (U_a-\theta_a)
  \stackrel{D}{\longrightarrow} \mathcal{N}(0,\Sigma_a),
 \label{eq:Ua_asymptotics}
\end{equation}{}
where $\theta_a =\lim_{N\rightarrow\infty} E(U_a)$ and 
$\Sigma_a = \frac{1}{\lambda_1^t}\Sigma_1^t + 
    \frac{1}{\lambda_1^c}\Sigma_1^c+...+ \frac{1}{\lambda_S^t}\Sigma_S^t + 
    \frac{1}{\lambda_S^c}\Sigma_S^c$
is the asymptotic covariance matrix of $\sqrt{N}U_a$,    $\Sigma_s^\omega$ is the covariance matrix of $(\Tilde{\eta}_s^{\omega,(1,2)},..., \Tilde{\eta}_s^{\omega,(S-1,S)})$ for $s\in\{1,\cdots,S\}$ and $\omega\in\{t,c\}$, where $\Tilde{\eta}_s^{\omega,(p,q)} = \eta_s^{\omega,(p,q)} I(s=p \text{ or } s=q)$.
\label{thm:UaVecAsymptotics}
\end{theorem}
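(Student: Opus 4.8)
The plan is to reduce Theorem~\ref{thm:UaVecAsymptotics} to the one-pair result established in Theorem~\ref{thm:Ua12Asymptotics} by a Cram\'er--Wold argument, using the linearization \eqref{eq:AppendixEtaSum} as the central tool. Concretely, for an arbitrary fixed vector $a = (a^{(1,2)}, a^{(1,3)}, \dots, a^{(S-1,S)})^T \in \mathbb{R}^{S(S-1)/2}$, I would study the scalar $\sqrt{N}\, a^T(U_a - \theta_a) = \sum_{1\le p<q\le S} a^{(p,q)} \sqrt{N}\,(U_a^{(p,q)} - \theta_a^{(p,q)})$ and show it converges to a univariate normal with the correct variance; the joint statement then follows from the Cram\'er--Wold device, and the limiting covariance is read off by polarization.

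The key steps, in order, are as follows. First, re-run the argument in the proof of Theorem~\ref{thm:Ua12Asymptotics} for each pair $(p,q)$ to obtain the asymptotically linear representation $U_a^{(p,q)} - \theta_a^{(p,q)} \approx \sum_{s\in\{p,q\}}\sum_{\omega\in\{t,c\}} \sum_{i=1}^{n_s^\omega} \eta_{si}^{\omega,(p,q)}$, where the $\eta$ terms are exactly those in \eqref{eq:Etas}; the author's remark that the derivation ``can be directly generalized to any $U_a^{(p,q)}$'' makes this immediate. Second, using the convention $\Tilde{\eta}_s^{\omega,(p,q)} = \eta_s^{\omega,(p,q)} I(s=p \text{ or } s=q)$, rewrite each representation as a sum over \emph{all} strata, $U_a^{(p,q)} - \theta_a^{(p,q)} \approx \sum_{s=1}^S \sum_{\omega\in\{t,c\}} \sum_{i=1}^{n_s^\omega} \Tilde{\eta}_{si}^{\omega,(p,q)}(Y_{si}^\omega)$, so that all pairwise statistics are expressed against the same index set. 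Third, form the linear combination: $a^T(U_a - \theta_a) \approx \sum_{s=1}^S \sum_{\omega\in\{t,c\}} \sum_{i=1}^{n_s^\omega} \big(\sum_{p<q} a^{(p,q)} \Tilde{\eta}_{si}^{\omega,(p,q)}(Y_{si}^\omega)\big)$, which is a sum of $2S$ independent blocks (by mutual independence of the $Y_s^\omega$ across $s$ and $\omega$), and within each block a sum of $n_s^\omega$ i.i.d.\ mean-zero terms, each of order $n_s^\omega{}^{-1}$ in magnitude so that $\sqrt{N}$ times the block sum has a finite limiting variance under Assumption~(2). Fourth, apply the Lindeberg--L\'evy CLT to each block and sum the limits; the resulting variance is $a^T \Sigma_a a$ with $\Sigma_a = \sum_{s,\omega} \frac{1}{\lambda_s^\omega}\Sigma_s^\omega$ and $\Sigma_s^\omega = \mathrm{Cov}\big((\Tilde{\eta}_s^{\omega,(1,2)},\dots,\Tilde{\eta}_s^{\omega,(S-1,S)})\big)$, precisely as stated. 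Finally, invoke Cram\'er--Wold to conclude the joint convergence \eqref{eq:Ua_asymptotics}, and note that taking expectations of the linearization (whose right-hand side has mean zero) gives $\theta_a = \lim_{N\to\infty} E(U_a)$.

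I expect the bookkeeping around the cross-covariances to be the main obstacle — not conceptually hard, but the place where care is needed. Two statistics $U_a^{(p,q)}$ and $U_a^{(p',q')}$ are correlated precisely through the strata they share; since $\Tilde{\eta}_s^{\omega,(p,q)}$ vanishes unless $s\in\{p,q\}$, the covariance contribution from stratum $s$ is nonzero only when $s$ belongs to both index pairs, which is exactly what the matrix $\Sigma_s^\omega$ (built from the full vector of $\Tilde{\eta}$'s, most of whose entries are identically zero for a given $s$) encodes automatically. The secondary technical point is justifying that the $o_p(n^{-1/2})$ remainders accumulated across the $O(S^2)$ pairs in the finite linear combination remain $o_p(N^{-1/2})$ after multiplication by $\sqrt{N}$ — this is routine since $S$ is fixed and a finite sum of $o_p(1)$ terms is $o_p(1)$, and $N/n \to$ constant under Assumption~(2), so the per-pair expansions transfer verbatim. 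With these observations the proof is essentially a reassembly of Theorem~\ref{thm:Ua12Asymptotics}'s components, and I would keep it short, citing the earlier proof for the per-pair linearization rather than repeating the Taylor expansions.
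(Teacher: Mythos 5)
Your proposal is correct and follows essentially the same route as the paper: both reuse the asymptotic linear representation $U_a^{(p,q)} - \theta_a^{(p,q)} \approx \sum_{s,\omega}\sum_i \Tilde{\eta}_{si}^{\omega,(p,q)}$ from Theorem~\ref{thm:Ua12Asymptotics} for every pair, stack these representations over the common index set, and conclude by a multivariate CLT whose covariance decomposes across the $2S$ independent blocks as $\Sigma_a = \sum_{s,\omega}\frac{1}{\lambda_s^\omega}\Sigma_s^\omega$. The only difference is presentational: the paper invokes the multivariate CLT directly on the vector of linearized statistics, whereas you make the underlying Cram\'er--Wold reduction explicit, which amounts to the same argument.
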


\begin{proof}
Following the proof of Theorem \ref{thm:Ua12Asymptotics} for $U_a^{(1,2)}$, we define $\hat{U}_a^{(1,2)}$ as 
\begin{equation}
    \hat{U}_a^{(1,2)} = \sum_{i = 1}^{n_1^t} \eta_{1}^{t,(1,2)}(Y_{1i}^t) + \sum_{i = 1}^{n_1^c}\eta_{1}^{c,(1,2)}(Y_{1i}^c) + \sum_{i = 1}^{n_2^t} \eta_{2}^{t,(1,2)}(Y_{2i}^t) + \sum_{i = 1}^{n_2^c}\eta_{2}^{c,(1,2)}(Y_{2i}^c).
    \label{eq:Ua12Hat}
\end{equation}
Thus we have 
\begin{equation}
    \sqrt{n_1+n_2}(U_a^{(1,2)} - \theta^{(1,2)}_a - \hat{U}_a^{(1,2)}) \stackrel{P}{\longrightarrow} 0, ~~ \text{as } (n_1+n_2) \rightarrow \infty.
    \label{eq:Ua12Hat_asymp}
\end{equation}
For each $U_a^{(p,q)}$ with $1\leq p<q\leq S$, we have $\hat{U}_a^{(p,q)}$ with the same form of (\ref{eq:Ua12Hat}) satisfying (\ref{eq:Ua12Hat_asymp}). Specifically,
\begin{align}
    &\hat{U}_a^{(p,q)} = \sum\limits_{j = 1}^{n_p^t}\eta_p^{t,(p,q)}(Y_{pj}^t) + \sum\limits_{j = 1}^{n_p^c}\eta_p^{c,(p,q)}(Y_{pj}^c) + \sum\limits_{j = 1}^{n_q^t}\eta_q^{t,(p,q)}(Y_{qj}^t) + \sum\limits_{j = 1}^{n_q^c}\eta_q^{c,(p,q)}(Y_{qj}^c)\\
    &\sqrt{n_p+n_q}(U_a^{(p,q)} - \theta_a^{(p,q)} - \hat{U}_a^{(p,q)}) \stackrel{P}{\longrightarrow} 0, ~~\text{as } (n_p+n_q) \rightarrow \infty.
\end{align}
Under the assumption that $\frac{n_s^\omega}{N}\rightarrow \lambda_s^\omega$ $(0<\lambda_s^\omega<1)$ when $N\rightarrow\infty$, for $s\in\{1,\cdots,S\}$ and $\omega\in\{t,c\}$, we have
\begin{equation}
\sqrt{N}
    \left({\begin{array}{c}
  U_a^{(1,2)}-\theta_a^{(1,2)}-\hat{U}_a^{(1,2)} \\
  U_a^{(1,3)}-\theta_a^{(1,3)}-\hat{U}_a^{(1,3)}\\
  \vdots\\
   U_a^{(S-1,S)}-\theta_a^{(S-1,S)}-\hat{U}_a^{(S-1,S)}
  \end{array} } \right)
  \stackrel{p}{\longrightarrow} 0, \text{ as } N\rightarrow\infty,
\end{equation}{}
and by the multivariate Central Limit Theorem,
\begin{equation}
    \sqrt{N}
    \left(
    {\begin{array}{c}
        \hat{U}_a^{(1,2)}  \\
         \hat{U}_a^{(1,3)}\\
         \vdots \\
         \hat{U}_a^{(S-1,S)}
    \end{array}{}}
    \right)
    \stackrel{D}{\longrightarrow} \mathcal{N}(0,\Sigma_a),
\end{equation}{}
where $\Sigma_a = \frac{1}{\lambda_1^t}\Sigma_1^t + 
    \frac{1}{\lambda_1^c}\Sigma_1^c+...+ \frac{1}{\lambda_S^t}\Sigma_S^t + 
    \frac{1}{\lambda_S^c}\Sigma_S^c$, and $\Sigma_s^\omega$ is the covariance matrix of $(\Tilde{\eta}_s^{\omega,(1,2)},..., \Tilde{\eta}_s^{\omega,(S-1,S)})$ with 
\begin{equation*}
  \Tilde{\eta}_s^{\omega,(p,q)} = 
    \begin{cases}
     \eta_s^{\omega,(p,q)} & \text{ if } s=p \text{ or } s=q, \\
     0 & \text{o.w.} \\
    \end{cases}
\end{equation*}

Therefore we have
\begin{equation}
\sqrt{N} (U_a-\theta_a)
  \stackrel{D}{\longrightarrow} \mathcal{N}(0,\Sigma_a) \text{ as } N\rightarrow\infty.
\end{equation}{}
Theorem~\ref{thm:UaVecAsymptotics} is obtained.
\end{proof}

The asymptotic covariance matrix $\Sigma_a$ in Theorem~\ref{thm:UaVecAsymptotics} can be conveniently estimated in a similar way as for the univariate case in Theorem~\ref{thm:Ua12Asymptotics}. That is, we first estimate the $\eta$ terms and $\Tilde{\eta}$ functions,
and then use the sample covariance matrix of estimated $\Tilde{\eta}_s^{\omega,(p,q)}$ to estimate $\Sigma_s^\omega$ for $s\in\{1,...,S\}$ and $\omega \in \{t,c\}$.

Given the estimated covariance $\hat{\Sigma}_a$, we can construct a global test statistic by considering a transformation on $U_a$. For instance, under $H_0: Y^t_s-Y^c_s$ are identically distributed for $s\in\{1, \cdots, S\}$, we have $\theta_a = \frac{1}{2}\mathbf{1}$; therefore a one-dimensional test statistic can be constructed as $T_a = N(U_a-\frac{1}{2}\mathbf{1})^T(U_a-\frac{1}{2}\mathbf{1})$. Though the analytic form of its reference distribution is not available, we can still approximate it via simulations. This can be done by drawing a large number of samples $\{r_1,\cdots, r_L\}$ from $\mathcal{N}(0,\hat{\Sigma}_a)$, and then use $\{||r_1||^2,\cdots, ||r_L||^2\}$ as the empirical reference distribution. Other functions of $U_a$, e.g., $\sqrt{N}\max\limits_{1\leq p<q\leq S}|U^{(p,q)}-\frac{1}{2}|$, can also be used as the global test statistic, whose reference distribution can be approximated by simulations. In the numerical studies, we focus on using $T_a$, and propose to reject the null hypothesis when $T_a$ is greater than or equal to the $100(1-\alpha)th$ percentile of $\{||r_1||^2,\cdots, ||r_L||^2\}$, where $\alpha$ is the significance level.

\subsection{Trimming Sample}
In the causal inference literature, it is common to exclude subjects with estimated propensity scores too close to 0 or 1 \citep{dehejia1999causal,crump2009dealing,imbens2015causal}. This trimming procedure has been shown to effectively improve the covariate balance between different treatment groups for several reasons. One is that those subjects whose true propensity scores that are equal to 0 or 1 should not be used since there are no counterparts in the alternative group. Another reason is that for those subjects whose estimated propensity scores are very close to 0 or 1, their counterparts will be associated with extremely large weights, which will then lead to a large variance for the estimated treatment effects.

There are two popular trimming rules. One is to set a hard threshold for propensity scores to be included in treatment effect estimates, e.g., $[\gamma, 1-\gamma]$ $(0<\gamma<\frac{1}{2})$ \citep{crump2009dealing}, i.e., subjects with propensity scores outside this range should be removed. The other is that we only use the subjects whose propensity scores are within the overlap region \citep{dehejia1999causal}. Specifically, 
we remove all subjects in the control group whose propensity scores are smaller than the minimum propensity score in the treatment group, and remove all subjects in the treatment group whose propensity scores are larger than the maximum propensity score in the control group. In practice, those two rules can be applied simultaneously.

It is worth mentioning that although the trimming procedure in general improves the treatment effect estimation accuracy, the reference population has changed. Hence there is a trade-off. Under this trade-off, people usually still prefer trimming because a reliable estimate for a subpopulation is generally considered more valuable than an estimate for the original population based on extrapolation or with large variance. In the numerical studies, we present both results  with and without trimming to demonstrate the effect of trimming. More specifically, when implementing trimming, we first remove subjects outside of the propensity score overlap region, and then re-run the same propensity score model for the remaining subjects to obtain the weights for our adjusted U tests. We have conducted several numerical experiments and found that the type I error is better controlled with the new propensity scores. Therefore we choose to implement this trimming procedure for all numerical studies in this paper. 

\section{Simulation}
\label{sec:Simulation}

We conduct simulation studies to evaluate the empirical performance of the proposed adjusted U-statistic test and compare it with the likelihood ratio test (LRT) and the U test developed in \cite{dai2020u}. Here, we focus on the case where the target population is the combination of the treatment and control groups, i.e., $h(x) = 1$. 
We consider the adjusted U tests with and without the trimming procedure, and denote them as AUT-T and AUT, respectively. 

\subsection{Implementation Details}
\label{sec:computationalChallenge}
We first discuss the computational implementation of both our proposed U tests and the LRT. The U test statistic in \eqref{eq:Ua_asymptotics} is a function of $S(S-1)/2$ pairwise adjusted U-statistics, and the computation of each adjusted U-statistic can be expensive in simulation studies.
Therefore instead of calculating the complete adjusted U-statistics, we randomly sample some of the $\phi$ functions in each of the adjusted U-statistics. Take $U_a^{(1,2)}$ in \eqref{eq:Ua12} as an example, for each stratum $s \in \{1, 2\}$ and treatment group $\omega\in\{t,c\}$, we randomly choose $M = 1000N$ ($N$ is the total sample size over all strata) subjects with replacement, denoted by $\{(y_{1i}^t,y_{1i}^c,y_{2i}^t,y_{2i}^c), i = 1,\cdots,M\}$. Then we calculate the kernel function $\phi_i$ based on $(y_{1i}^t,y_{1i}^c,y_{2i}^t,y_{2i}^c)$ and use the weighted average of $\{\phi_i, i = 1,\cdots,M\}$ to approximate $U_a^{(1,2)}$. As we also use the weighted kernel functions to estimate $\Tilde{h}_s^{\omega}(Y_{si}^\omega)$ for $i\in\{1,\cdots,n_s^\omega\}$, $s\in\{1, 2\}$ and $\omega\in\{t,c\}$, which are required to obtain $\hat{\Sigma}_a$, we need to make sure that each subject is sampled at least once. This is usually satisfied given a large sampling size $M$, and we redo the sampling process on the rare occasion that this requirement is not met. The sampling size $M = 1000N$ was selected by running a series of different simulation scenarios with $3$ strata and $N$ ranging from 60 to 3000; this choice of $M$ ensured the variance of the approximated test statistic $\frac{T_a}{N} =  \sum\limits_{1\leq p<q\leq S}(U^{(p,q)}-\frac{1}{2})^2$ to be smaller than 0.003. In order to approximate the reference distribution of $\frac{T_a}{N}$, $10^5$ samples $\{r_i, i = 1,\cdots,10^5\}$ are generated independently from the estimated reference distribution $\mathcal{N}(0,\frac{1}{N}\hat{\Sigma}_a)$. Then $\{||r_i||^2, i = 1,\cdots,10^5\}$ are used to obtain the empirical reference distribution $\frac{T_a}{N}$. The sample size of $10^5$ is chosen to ensure that the variance of the $95^{th}$ percentile of $\{||r_i||^2, i = 1,\cdots,10^5\}$ is below 0.0001.

Next we give a brief review of the competitive approach for testing the treatment effect homogeneity, i.e., the LRT proposed by \cite{gail1985testing}. With $S$ strata, they test the null hypothesis that the average treatment effects $\tau_s$ $(s\in\{1, \cdots, S\})$ are the same across all of the strata versus the alternative that at least two of them are unequal. Assuming the treatment effect estimates $\hat{\tau}_s$ $(s\in\{1, \cdots, S\})$ follow normal distributions as $\hat{\tau}_s \stackrel{indep}{\sim} \mathcal{N}(\tau_s, \sigma_s^2)$, 
then a test statistic is constructed as
\begin{equation}
    H = \sum_{s=1}^S(\hat{\tau}_s-\bar{\hat{\tau}})^2/s_s^2 \stackrel{H_0}{\sim} \chi_{S-1}^2,
\end{equation}
where $\bar{\hat{\tau}} = \frac{\sum_{s=1}^S \hat{\tau}_s/s_s^2}{\sum_{s=1}^S 1/s_s^2}$, and $s_s^2$ is a consistent estimator of  $\sigma_s^2$ for $s\in\{1, \cdots, S\}$. For an $\alpha$ level test, we reject the null hypothesis when $H$ is greater than or equal to the $100(1-\alpha)^{\text{th}}$ percentile of $\chi_{S-1}^2$.

In randomized experiments where we can directly compare the outcomes of different treatment groups to estimate the treatment effect, $\hat{\tau}_s$ can be the difference of the outcome averages. In observational studies, a method for estimating $\hat{\tau}_s$ that adjusts for confounding variables should be used. Any methods that can provide normally distributed $\hat{\tau}_s$ and consistent estimator for $\sigma_s^2$ in stratum $s$ for $s\in\{1,\cdots,S\}$ can be used. For instance, when the outcome follows a continuous distribution, a linear regression model between the outcome and the treatment indicator and other confounding variables can be fitted within each stratum. Under the assumption that the outcomes are mutually independent, the normality assumption for $\hat{\tau}_s$ will be satisfied when the stratum sample size $n_s$ goes to infinity. In this simulation, we fit a linear regression in each stratum $s$ $(s\in\{1,\cdots,S\})$ to obtain $\hat{\tau_s}$ and $\hat{\sigma}_s^2$. We focus on the case that $Y^t_s-Y^c_s$ $(s\in\{1, \cdots, S\})$ follow the same distribution up to a location shift. Thus the hypotheses of the adjusted U tests are equivalent to those of the LRT; hence those two tests are directly comparable.

\subsection{Simulation Design}
\label{sec:simulationDesign}
We consider three strata $(S=3)$, where each stratum has the same sample size, i.e., $n_1 = n_2 = n_3 = n$. For each stratum $s$, we generate the data from an outcome model $Y_{s} = 1 + \beta_{s,t}T_{s} +Z_{s}+\epsilon_s$ for $s\in\{1,2,3\}$, where the treatment indicator $T_s \sim \text{Bern}(p_s)$, and the residual terms $\epsilon_s$ follow a common distribution $F_\epsilon$ across all strata. The probability of being assigned to the treatment group $p_s$ is also a function of the confounding variable $Z_s$, for which we assume $\text{logit}(p_s) = \gamma_sZ_s$. In the following simulations, we fix $Z_1\sim \mathcal{N}(0,1)$, $Z_2\sim \mathcal{N}(0,1)$, $Z_3\sim \text{Unif}(-0.5,0.5)$, and choose $\gamma_1 = 1$, $\gamma_2 = -1$, $\gamma_3 = 1$, such that the confounding variables either follow different distributions or satisfy different relationships with the treatment assignment among the three strata. Also, the treatment effects are set as $\beta_{1,t} = 1$, $\beta_{2,t} = 1+\Delta$, $\beta_{3,t} = 1+2\Delta$, where the constant $\Delta$ is treated as the effect size. Note that when $\Delta = 0$, there still exists a treatment effect within each stratum although there is no treatment effect heterogeneity, i.e., the null hypothesis is true. For all of the simulation scenarios, we fix the significance level at 0.05, and repeat the data generating mechanism for $L = 2000$ times to obtain the empirical rejection rates. 

In addition to the simulation design described above, we also consider several other designs with unequal sample size and different error distributions across the three strata. The simulation designs and results are very similar to those in \cite{dai2020u}, so we choose not to present them in this paper.  


\subsection{Simulation results}
We first check the type I error of our proposed adjusted U-test with and without trimming (AUT-T and AUT) when $\Delta = 0$, $n = 200$, $F_\epsilon = \mathcal{N}(0,1)$, and compare them to the unadjusted U test reviewed in Section~\ref{sec:background}. Based on 2000 Monte-Carlo replications, the type I error rates for the AUT-T and AUT are 0.051 and 0.058,  both are very close to the nominal level of 0.05, whereas the unadjusted U test has a rejection rate of 1.000. The invalidity of the unadjusted U test is not surprising, because the unweighted outcome distributions are quite different between treatment and control groups in each stratum, as shown in Figure~\ref{fig:DensityUnadjustOutcomes}. This finding clearly demonstrates the need for confounder adjustment when testing for treatment effect heterogeneity. In Figure \ref{fig:PvalueUvsExpected}, we plot the empirical p-values with the expected uniformly distributed p-values for both the AUT-T and AUT methods. We find that the empirical distribution for the p-values is very close to the uniform distribution under the null hypothesis, which confirms both the validity of the asymptotic null distribution derived in Theorem \ref{thm:UaVecAsymptotics} and the accuracy of random sampling when calculating the test statistics. Compared to AUT, the results for AUT-T is less perfect due to the fact that the population has changed after trimming the propensity score. To demonstrate the effect of trimming, we present the average number of removed subjects for each strata in Table \ref{tab:Valid_NTrimmed} in Supplementary Material, and find that the effect of trimming is minor since less than $8\%$ of the subjects are removed from each stratum. 


\begin{figure}
    \centering
    \includegraphics[width =2.5in]{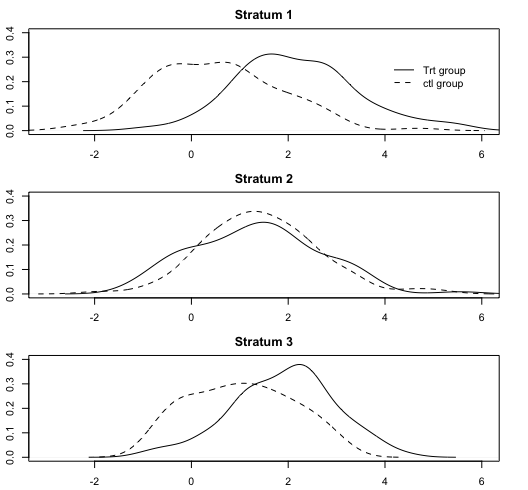}
    \caption{Density plots for the unadjusted outcomes in the treatment and control groups.}
    \label{fig:DensityUnadjustOutcomes}
\end{figure}

\begin{figure}
    \centering
    \includegraphics[width =2.3in]{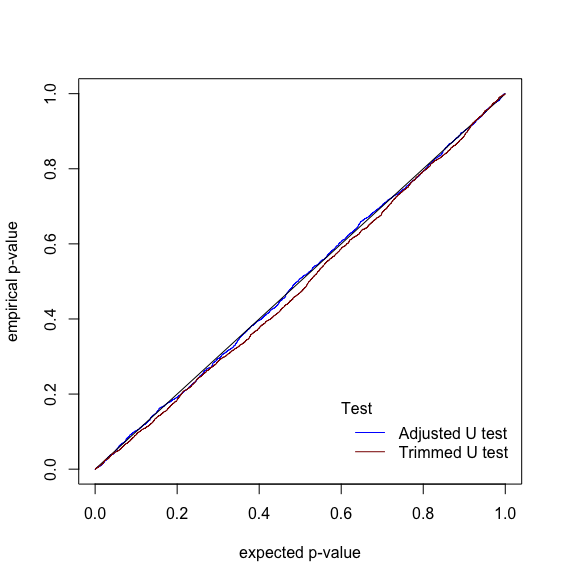}
    \caption{Empirical and expected p-values for proposed U tests under the null hypothesis.}
    \label{fig:PvalueUvsExpected}
\end{figure}

Next we investigate the power for the proposed adjusted U tests under different values for the sample size $n$, effect size $\Delta$ and error distributions $F_\epsilon$. We also use the results from the regression-based LRT as a benchmark for power comparison.

We choose four distributions for $F_\epsilon$: $\mathcal{N}(0,1)$, $\text{Unif}(-2, 2)$, $t_4$ and $0.5\mathcal{N}(-5,1)+0.5\mathcal{N}(5,1)$. For each of them, we consider four effect sizes (including 0), and then present the empirical rejection rates for the adjusted U tests and the LRT in Figure~\ref{fig:PowerVsN}. We first note that under all four scenarios, the type I error rates are very close to the nominal level $0.05$. There is a minor discrepancy for the trimmed U test, especially when the sample size is small. This is expected because trimming changes the reference population, although the number of trimmed subjects (see Figure \ref{fig:PowerVsN_NTrimmed} in Supplementary Material) is quite small (between $2\%$ and $15\%$). Therefore it is fair to compare the power of those three tests given that their type I errors are at the same level.

When $\Delta>0$, we first notice that the power increases quickly to one as either the sample size $n$ or the effect size $\Delta$ increases. By comparing the power between the two adjusted U tests (AUT-T and AUT), we find that overall ATU-T has a larger power although the advantage is not significant. This is expected because we only remove a minor percentage of subjects by trimming. We then compare the power of the AUT and LRT, and find that LRT is more powerful than AUT if the error distribution $F_{\epsilon}$  is normal or having lighter tails than normal distribution (e.g., uniform distribution). On the other hand, our proposed AUT is more powerful than LRT when $F_{\epsilon}$ has heavy tails (e.g., $t_4$) or deviates far away from a normal distribution (e.g., a bimodal distribution as $0.5\mathcal{N}(-5,1)+0.5\mathcal{N}(5,1)$). Those findings confirm that the LRT is still the most powerful test under the normality assumption. However, our proposed method will gain efficiency in testing against the null hypothesis as the true error distribution starts to move away from a normal distribution, with a more significant improvement in power over LRT when the error distribution is bimodal.

\begin{figure}
    \centering
    \includegraphics[width = 2.5in]{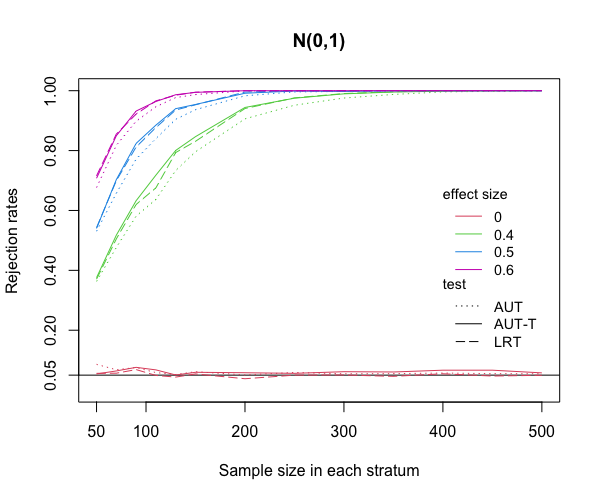}
    \includegraphics[width = 2.5in]{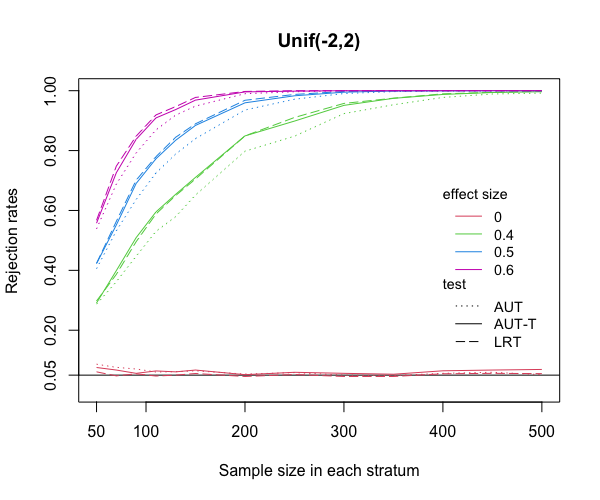}
    \includegraphics[width = 2.5in]{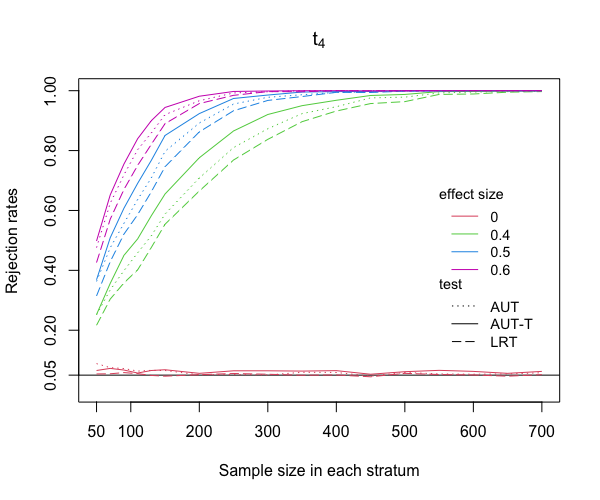}
    \includegraphics[width = 2.5in]{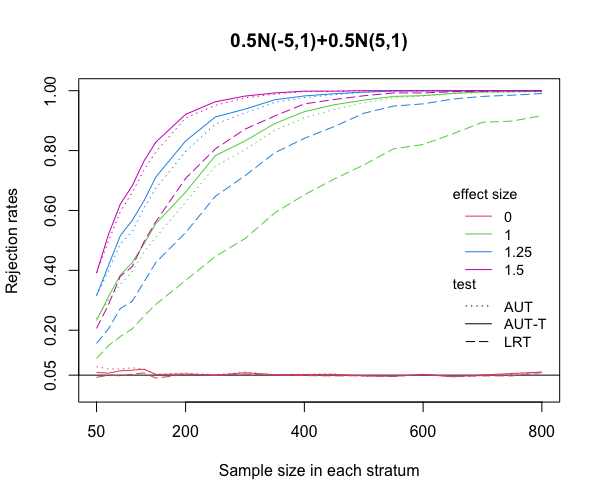}
    \caption{Power analysis: empirical rejection rates for three tests under various error distributions, sample sizes, and effect sizes, based on 2000 Monte-Carlo replications.}
    \label{fig:PowerVsN}
\end{figure}


\subsection{Sensitivity Analysis}
\label{sec:simu_sensitivity}
Because our proposed adjusted U test is based on a propensity score model, in this section, we conduct a sensitivity analysis to evaluate the performance of our method under misspecification of the propensity score model. It is worth mentioning that despite recent advances in propensity score model diagnosis \citep{imbens2015causal,vegetabile2020optimally} by measuring the degree of covariance balance from the weighted samples in the treatment and control groups, measuring covariate balance still remains challenging especially when the number of covariates is large. Therefore it remains important to explore the sensitivity of the proposed test.

\cite{dehejia1999causal} implemented matching and subclassification on propensity scores to estimate treatment effects (in their approach, they removed subjects with estimated propensity scores outside of the overlap region between treatment and control groups), and compared its sensitivity with a linear-regression-based approach. Both of the approaches have one model to specify. The propensity-score-based approach needs to specify the propensity score model, and the linear-regression-based approach needs to specify the outcome model. \cite{dehejia1999causal} found that the misspecification of propensity score model had a smaller impact than that of the outcome model. Motivated by their findings, we compare the sensitivity of our adjusted U tests based on propensity scores to the LRT which uses linear regressions to estimate stratum-specific treatment effects.

For data generation, we consider three strata $(S = 3)$, each with a sample size of $200$, and a confounding variable $Z_s$ $(s = 1,2,3)$ in each stratum satisfying $Z_1\sim \mathcal{N}(0,0.5^2)$, $Z_2\sim \mathcal{N}(0,0.5^2)$, $Z_3\sim \text{Unif}(-2,2)$. We add a quadratic term of $Z_s$ to both the outcome model and propensity score model as $Y_s = T_s+Z_s+\beta_{s,2}Z_s^2+\epsilon_s$ and $\text{logit}(p_s) = \gamma_{s,0}+Z_s+\gamma_{s,2}Z_s^2$  with $T_s\sim \text{Bern}(0,p_s)$ and $\epsilon_s\sim F_\epsilon$ for $s\in\{1,2,3\}$. Note that there is no treatment effect heterogeneity in this scenario, i.e., the null hypothesis is true. Furthermore, we set $\beta_{1,2} = \gamma_{1,2} = 2 $, $ \beta_{2,2} = \gamma_{2,2}=-2$, $\beta_{3,2} = \gamma_{3,2} =2$, $\gamma_{1,0} = -0.5$, $\gamma_{2,0} = 0.5$, $\gamma_{3,0} = 0.5$, to make the confounding variable $Z$ play a similar role in both the outcome and propensity score models (the coefficients for $Z$ and $Z^2$ are the same in both models for every stratum). Here we explore the extent to which the empirical distributions of p-values for the adjusted U tests deviate from the expected p-values when the propensity model is fitted without the quadratic term. Also, for the LRT, we check the empirical distribution of the p-values when the treatment effects are estimated without the quadratic term. Similarly with before, we consider four cases with the error distribution $F_\epsilon$ as $\mathcal{N}(0,1)$, $\text{Unif}(-2, 2)$, $t_4$ and $0.5\mathcal{N}(-5,1) + 0.5\mathcal{N}(5,1)$.


Figure~\ref{fig:misspecify_Pvalue} shows the relationship between the empirical p-values versus the expected uniform p-values for three tests under each of the four error distributions. Among the three tests, the AUT-T is always the most robust to model misspecification, while AUT and LRT have comparable performance, e.g., AUT is slightly more robust when the error distribution is $\mathcal{N}(0,1)$ or $\text{Unif}(-2,2)$, while LRT is more robust when $F_\epsilon$ is $t_4$. This finding confirms the benefit of an increasing level of model robustness offered by subject trimming based on propensity scores, and is consistent with the findings in \cite{dehejia1999causal}. We also show the average number of trimmed subjects for each stratum in Table \ref{tab:sensitivity_NTrimmed} in Supplementary Material, and find that proportion to be reasonably small ($<5$\%).

\begin{figure}
    \centering
    \includegraphics[width = 2.5in]{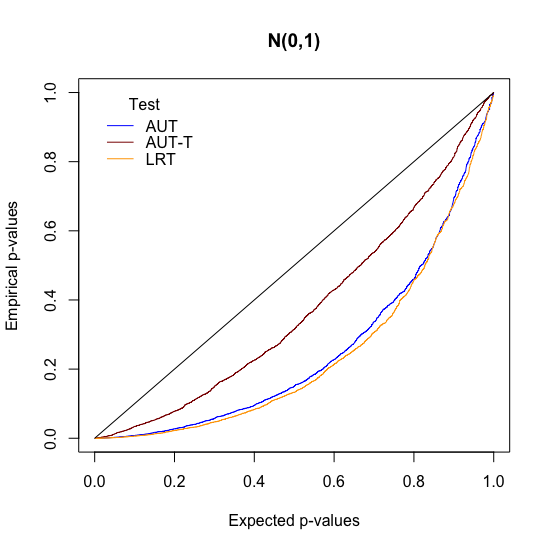}
    \includegraphics[width = 2.5in]{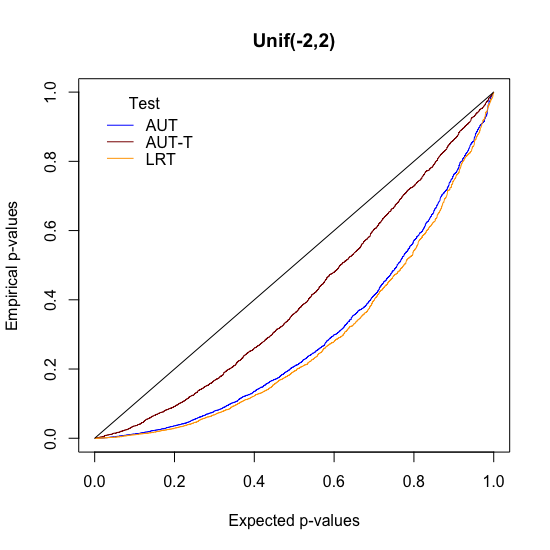}
    \includegraphics[width = 2.5in]{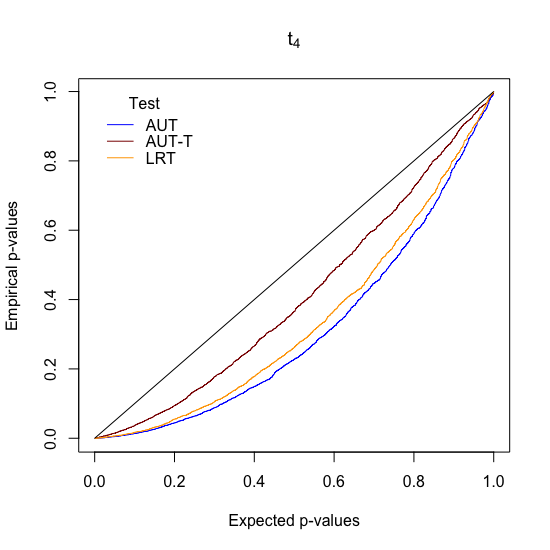}
    \includegraphics[width = 2.5in]{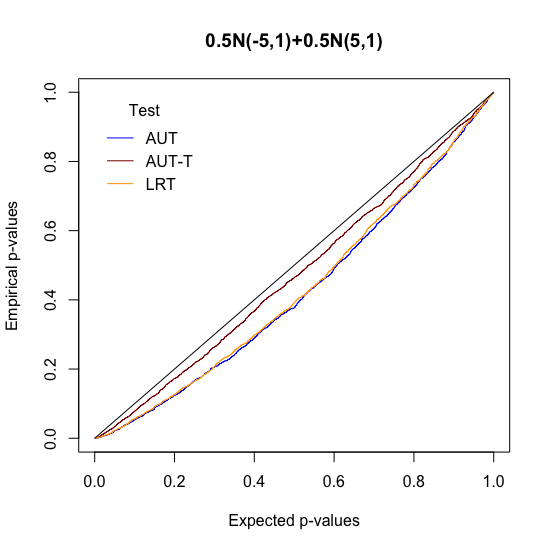}
    \caption{Empirical p-values of misspecified adjusted U test, trimmed U test and LRT vs expected p-values.}
    \label{fig:misspecify_Pvalue}
\end{figure}

\section{Case Study}
\label{sec:caseStudy}

\subsection{Comparing effects of an employment program on people with different ages}
\label{sec:caseStudy_labor}
We apply the proposed method to an employment program evaluation study in labor economics, which evaluates the effect of the National Support Work (NSW) Demonstration on trainee earnings. The NSW was conducted in the mid-1970s with the goal of helping disadvantaged workers gain working experience.  More details about this program can be found in \cite{lalonde1986evaluating} and \cite{dehejia1999causal}. In this program, applicants were randomly assigned to the treatment and control groups; and the treatment effect can be easily assessed by directly comparing the outcomes between those two groups. In order to evaluate whether observational studies can replicate results from randomized experiments, \cite{lalonde1986evaluating} compared the treated subjects in the experiment to two nonexperimental comparison groups, namely, the Panel Study of Income Dynamics (PSID-1) and Current Population Survey-Social Security Administration File (CPS-1), as well as several subsets of them. The collected pretreatment covariates include age, education, marital status, indicator of ``no degree", race indicators, earnings in 1974 (RE74) and 1975 (RE75). The outcome of interest is earnings in 1978.

We focus on the data set constructed by \cite{dehejia1999causal}, which is a subset of the original data set in \cite{lalonde1986evaluating} that includes data collected from male participants who have earnings information in 1974. The data is available at \url{https://users.nber.org/~rdehejia/data/.nswdata2.html}. It has been shown by \cite{dehejia1999causal} that there is a positive treatment effect.  Our goal here is to investigate whether there is treatment effect heterogeneity across different age groups for the treated subjects. Two strata are created based on the median age ($25$ years old) of the treatment group, that is, stratum 1 for subjects with age $\leq 25$ and stratum 2 for age $>25$. Figure~\ref{fig:CaseStudyOutcomeDist} in Supplementary Material shows the outcome distributions of the treated subjects in the two strata, and it is clear that both distributions are highly right-skewed, which suggests that nonparametric U tests should be preferred to the LRT.

We compare the NSW treatment group to the NSW control group and CPS-1 separately. The first three columns of Table~\ref{tab:covariateBalance} shows the summary statistics of baseline covariates in both strata for the three groups. 
To compare the NSW treatment group with its control, we notice that the baseline covariates between groups are similarly distributed, so the unadjusted U test can be applied to assess the treatment effect heterogeneity between the two strata. We obtain an estimated unadjusted U-statistic of 0.554 with a p-value of 0.181, which suggests that the treatment effect in the younger group (stratum 1) is smaller than that in the elder group (stratum 2), although this difference is not statistically significant (note that a U-statistic value of 0.5 means no heterogeneity between those two strata, and a value larger than 0.5 means stratum 1 has a smaller treatment effect than that of stratum 2). 

\begin{table}
\centering
\scalebox{0.76}{
\begin{tabular}{lllll}
  \hline
 & NSW Treated & NSW Control & CPS-1 & Weighted and Trimmed CPS-1 \\ 
  \hline
  \textbf{Stratum 1} \\
Sample size & 106 & 161 & 4676 & 2169 \\ 
 \hline
 Age & 21.09 (2.76) & 20.75 (2.75) & 20.82 (2.82) & 20.97 (2.51) \\ 
  Education & 10.29 (1.77) & 9.93 (1.43) & 11.91 (2.14) & 10.2 (1.54) \\ 
  Black & 0.82 (0.39) & 0.8 (0.4) & 0.08 (0.28) & 0.85 (0.36) \\ 
  Hispanic & 0.08 (0.26) & 0.13 (0.33) & 0.07 (0.26) & 0.06 (0.24) \\ 
  Married & 0.11 (0.32) & 0.09 (0.28) & 0.36 (0.48) & 0.1 (0.3) \\ 
  Nodegree & 0.72 (0.45) & 0.89 (0.3) & 0.34 (0.47) & 0.78 (0.41) \\ 
  RE74 & 2129.02 (4809.7) & 2195.81 (6240.8) & 7044.39 (7156.6) & 1845.71 (4032.9) \\ 
  RE75 & 1215.97 (2140.9) & 1125.32 (3037.3) & 7665.79 (7251.4) & 1068.04 (2379.4) \\ 
  \hline
  \textbf{Stratum 2} \\
Sample size & 79 & 99 & 11316 & 1668 \\ 
  \hline
Age & 32.15 (6.24) & 32.05 (6.24) & 38.35 (8.9) & 32.25 (5.97) \\ 
Education & 10.42 (2.28) & 10.35 (1.84) & 12.07 (3.12) & 10.47 (2.1) \\ 
Black & 0.87 (0.33) & 0.87 (0.33) & 0.07 (0.26) & 0.89 (0.32) \\ 
Hispanic & 0.04 (0.2) & 0.07 (0.26) & 0.07 (0.26) & 0.03 (0.17) \\ 
Married & 0.29 (0.46) & 0.26 (0.44) & 0.86 (0.35) & 0.24 (0.42) \\ 
Nodegree & 0.7 (0.46) & 0.74 (0.44) & 0.28 (0.45) & 0.67 (0.47) \\ 
RE74 & 2050.7 (4957.2) & 1962.64 (4611.6) & 16897.94 (8936.6) & 1993.3 (4772.0) \\ 
RE75 & 1956.17 (4204.0) & 1497.18 (3178.3) & 16123.93 (8876.9) & 1909.62 (4093.3) \\ 
   \hline
\end{tabular}}
\caption{Sample means (standard deviations) of baseline characteristics for NSW and CPS-1 data in two age strata.}
\label{tab:covariateBalance}
\end{table}

We then study the comparison between the NSW treatment group and CPS-1 group.  Table~\ref{tab:covariateBalance} suggests that the baseline covariate distributions in those two groups seem to differ quite a lot. Therefore we apply the proposed adjusted U test with trimming. In both strata, we use logistic regressions to estimate propensity scores. For stratum 1 we use the following covariates: age, $\text{age}^2$, $\text{age}^3$, education, $\text{education}^2$,  $I(\text{married})$, $I(\text{no degree})$, $I(\text{black})$, $I(\text{Hispanic})$, RE74, RE75, $I(\text{RE74} = 0)$, $I(\text{RE75} = 0)$, $\text{RE74} *I(\text{married})$ and  $\text{RE74} * I(\text{no degree})$. In stratum 2, we consider age, $\text{age}^2$, $\text{age}^3$, education, $\text{education}^2$,  $I(\text{married})$ +$I(\text{no degree})$, $I(\text{black})$, $I(\text{Hispanic})$, RE74, RE75, $I(\text{RE74} = 0)$, $I(\text{RE75} = 0)$ and $\text{education} * \text{RE74}$. Most of those covariates are also included in the study of \citet{dehejia1999causal}. Subjects are weighted according to (\ref{eq:weights}) with $h(x) = e(x)$. We present summary statistics for the baseline covariates after trimming and weighting as in the fourth column of Table~\ref{tab:covariateBalance}. The weighted distributions of baseline covariates in CPS-1 are very similar to the NSW treatment group. Due to the large sample size, we randomly sample $M = 1000N$ ($N = 4022$ is the total sample size) weighted kernel functions to approximate the adjusted U-statistics as illustrated in Section~\ref{sec:computationalChallenge}. The estimated adjusted U-statistic comparing the treatment effects in the two strata is 0.541 with a p-value of 0.508, which leads to the same conclusion as the randomized data comparison (NSW treatment versus its control). Meanwhile, if an unadjusted U test is applied to conduct the same comparison, then the estimated U-statistic would be 0.426 with a p-value of 0.004, which will lead to an opposite conclusion. This finding confirms the benefit of our proposed methodology and also highlights the necessity of appropriately adjusting for covariate balance between groups when testing for a treatment heterogeneity effect.

\subsection{Assessing heterogeneity of the effect of being an only child on mental health}
\label{sec:casestudy_onechild}
From 1979 to 2015, China's one-child policy was implemented to slow down the rapid growth of the nation's population. Though the policy has led to economic benefits for China, it has been criticized for introducing a series of social problems, e.g., forced abortions, female infanticide, and a heavy burden of elderly support \citep{hesketh1997one}. Apart from these problems, the psychological wellbeing of the massive number of only children resulting from the policy has been a great concern because it has been widely recognized that siblings have a large impact on children's social behavior and mental health \citeg{dunn1988sibling,mchale2012sibling}. Only children in China are generally perceived to be more self-centered and less trustworthy. 
However the difference between only and non-only children may vary with geographic area and gender for two reasons. First,  parents living in urban and rural areas differ in many aspects including education level, family income and lifestyle. Second, a preference for male children was prevalent at that time, especially in rural areas. For these reasons, the literature assessing the effects of being an only child are typically carried out in different strata that are determined by the type of region (urban/rural) and gender (male/female). For example, \cite{wu2014only} found that only children have worse mental health than children with siblings on average in China, but this negative effect mainly came from rural males,  whereas \cite{zeng2020being} found that the negative effects were more significant in urban areas. It is hence of interest to apply the adjusted U test to study whether there is significant treatment effect heterogeneity among the four subpopulations: urban males, urban females, rural males and rural females.

The data we use is obtained from the Chinese Family Panel Studies (CFPS) \citep{xie2014introduction}, which is a longitudinal survey aiming at documenting changes in various aspects of Chinese society. The baseline survey was conducted in 2010. It covers 25 provinces/municipalities/autonomous regions that represent 95\% of the Chinese population. The data set we focus on is a subset of the CFPS baseline sample constructed by \cite{zeng2020being}. It consists of children born after 1979 with ages between 20 and 31. The data set is available a \url{https://rss.onlinelibrary.wiley.com/pb-assets/hub-assets/rss/Datasets/RSSA%20183.4/A1595Zeng-1600084584507.zip}.
For families with more than one child, only the oldest child is included in the data set. Baseline covariates include age, ethnicity (Han or not), parents' education level (in years), family income in 2010, parents' marital status (divorced or not), parents' ages when the child was born, region type (urban/rural) and gender. The responses include three self-rated psychological measures: confidence, anxiety and desperation. All measures take integer values from 1 to 5, with a higher value indicating better mental health. We treat the only children as the treatment group and the other children as the control group. 
We also remove subjects with obviously erroneous information, e.g., a parent's age below 14 at the time of the child's birth or any response measure outside the range of the scale. Three children with family annual incomes higher than two million Chinese Yuan are removed because these are dramatically larger than the remainder. The final data set has 4187 subjects, with 971 in the treatment group (only children). The distributions of baseline coavariates and outcomes are summarized in the left-hand side of the first panel in Table \ref{tab:Onechild_baselineDist_strata}. We find that parents with only one child have higher average education level and family income. Among only children, there are large proportions of male or urban subjects compared to children with siblings. With respect to the three responses, their summary statistics are very similar between the two treatment groups. Figure \ref{fig:Onechild_responseDist} in Supplementary Material shows the distributions of the three responses in each treatment group. Apart from the fact that every outcome is similarly distributed in the treatment and control groups, they are all heavily left-skewed.

We first apply the weighted version of the Mann-Whitney test introduced by \cite{satten2018multisample} to assess overall average treatment effects with respect to the three outcomes. We standardize all baseline covariates and then fit a logistic regression to estimate propensity scores. After trimming subjects whose estimated propensity scores are outside of the overlap region, we fit the same logistic regression again with the remaining subjects and use the newly estimated propensity scores for weighting. The weights are based on formulas in (\ref{eq:weights}). As we focus on estimating the average treatment effects, we use $h(x) = 1$. Summary statistics for the baseline and response variables after trimming and weighting are presented in the right-hand side of the first panel in Table \ref{tab:Onechild_baselineDist_strata}. There is a clear improvement in covariate balance, though the summary statistics of the responses do not change much. The adjusted U-statistics and corresponding 95\% confidence intervals are given in the first row of Table \ref{tab:oneChild_MWtests}. Here the expectation of the adjusted U-statistic is the probability that outcome in treatment group is smaller than that in control group. Thus a value larger than 0.5 indicates a negative treatment effect, i.e., worse outcomes for only children. The U-statistics show that only children are less confident, less anxious and more desperate than children with siblings, with the 95\% confidence intervals showing that none of these findings are statistically significant.

\begin{table}
\centering
\begin{threeparttable}
\scalebox{0.63}{
\begin{tabular}{|l|ll|ll|}
  \hline
  & \multicolumn{2}{c|}{Unweighted} & \multicolumn{2}{c|}{Trimmed and Weighted}\\
 & Only children & Children with siblings & Only children & Children with siblings \\ 
  \hline
    \hline
    \textbf{All}\tnote{a} &&&&\\
Sample size & 971 & 3216 & 968 & 3216 \\ 
\hline
Baseline covariates &&&&\\
  Maternal education (yrs) & 7.95 (4.28) & 4.19 (4.29) & 4.44 (4.64) & 4.98 (4.49) \\ 
  Paternal education (yrs) & 8.72 (3.98) & 6.41 (4.36) & 6.21 (4.56) & 6.88 (4.36) \\ 
  Age (yrs) & 24.99 (3.38) & 25.19 (3.51) & 25.38 (3.65) & 25.17 (3.50) \\ 
  Han ethnicity & 0.96 (0.20) & 0.89 (0.32) & 0.88 (0.32) & 0.91 (0.30) \\ 
  Family anuual income (Chinese Yuan) & 56957.5 (58152.7) & 37403.1 (44133.1) & 41324.5 (51470.2) & 42793.1 (54362.3) \\ 
  Parental age at birth (yrs) & 26.83 (3.81) & 27.66 (5.11) & 27.92 (5.68) & 27.45 (4.99) \\ 
  Maternal age at birth (yrs) & 25.09 (3.44) & 25.7 (4.54) & 25.9 (4.67) & 25.55 (4.44) \\ 
  Divorce & 0.03 (0.17) & 0.01 (0.10) & 0.01 (0.10) & 0.01 (0.10) \\ 
  Urban area & 0.78 (0.41) & 0.39 (0.49) & 0.43 (0.50) & 0.48 (0.50) \\ 
  Male & 0.59 (0.49) & 0.47 (0.50) & 0.49 (0.50) & 0.50 (0.50) \\ 
  Outcomes &&&&\\
  Confidence & 3.96 (0.92) & 4.02 (0.95) & 3.95 (0.95) & 4.02 (0.94) \\ 
  Anxiety & 4.62 (0.67) & 4.60 (0.69) & 4.63 (0.69) & 4.61 (0.68) \\ 
  Desperation & 4.68 (0.62) & 4.72 (0.61) & 4.69 (0.62) & 4.73 (0.61) \\ 
   \hline
   \hline
  \textbf{Urban males}\tnote{b} &&&&\\
Sample size & 430 & 580 & 423 & 558 \\ 
\hline
Baseline covariates &&&&\\
  Maternal education (yrs) & 8.66 (4.05) & 5.15 (4.37) & 6.58 (4.70) & 6.63 (4.43) \\ 
  Paternal education (yrs) & 9.44 (3.76) & 7.29 (4.37) & 8.05 (4.14) & 8.20 (4.25) \\ 
  Age (yrs) & 25.38 (3.33) & 25.66 (3.53) & 25.62 (3.46) & 25.71 (3.50) \\ 
  Han ethnicity & 0.98 (0.14) & 0.93 (0.24) & 0.96 (0.20) & 0.96 (0.20) \\ 
  Family anuual income (Chinese Yuan) & 59449.4 (60477.7) & 45266.9 (45263.7) & 51219.8 (56528.3) & 54606.3 (64632.9) \\ 
  Paternal age at birth (yrs) & 26.94 (3.48) & 27.90 (4.90) & 27.24 (4.35) & 27.36 (4.52) \\ 
  Maternal age at birth (yrs) & 25.16 (3.23) & 26.29 (4.32) & 25.61 (4.00) & 25.69 (3.76) \\ 
  Divorce & 0.03 (0.17) & 0.01 (0.10) & 0.02 (0.14) & 0.02 (0.14) \\ 
  \hline
  Outcomes &&&&\\
  Confidence & 4.00 (0.92) & 3.96 (0.98) & 3.98 (0.94) & 3.97 (0.93) \\ 
  Anxiety & 4.61 (0.7) & 4.64 (0.62) & 4.63 (0.67) & 4.65 (0.58) \\ 
  Desperation & 4.67 (0.62) & 4.74 (0.58) & 4.67 (0.61) & 4.75 (0.56) \\ 
  \hline
  \hline
  \textbf{Urban females}\tnote{c}  &&&&\\
  Sample size & 331 & 690 & 330 & 634 \\ 
  \hline
  Baseline covariates &&&&\\
  Maternal education (yrs) & 9.05 (3.74) & 5.83 (4.28) & 7.03 (4.41) & 7.2 (4.21) \\ 
  Paternal education (yrs) & 9.54 (3.39) & 7.43 (4.13) & 8.35 (3.74) & 8.42 (3.91) \\ 
  Age (yrs) & 25.01 (3.37) & 25.69 (3.55) & 25.43 (3.47) & 25.44 (3.46) \\ 
  Han ethnicity & 0.95 (0.22) & 0.93 (0.24) & 0.93 (0.26) & 0.93 (0.24) \\ 
  Family anuual income (Chinese Yuan) & 64914.3 (59182.4) & 50261.4 (61045.9) & 55548.9 (51572.5) & 55414.0 (53315.4) \\ 
  Paternal age at birth (yrs) & 27.07 (3.41) & 27.86 (4.99) & 27.03 (3.74) & 27.21 (3.78) \\ 
  Maternal age at birth (yrs) & 25.47 (3.04) & 25.91 (4.19) & 25.33 (3.34) & 25.54 (3.44) \\ 
  Divorce & 0.03 (0.17) & 0.01 (0.10) & 0.02 (0.14) & 0.02 (0.14) \\ 
  \hline
  Outcomes &&&&\\
   Confidence & 3.89 (0.87) & 3.94 (0.92) & 3.87 (0.87) & 3.99 (0.91) \\ 
  Anxiety & 4.67 (0.57) & 4.62 (0.67) & 4.69 (0.55) & 4.61 (0.67) \\ 
  Desperation & 4.68 (0.60) & 4.73 (0.59) & 4.69 (0.57) & 4.73 (0.60) \\ 
  \hline
  \hline
  \textbf{Rural males}\tnote{d} &&&& \\
  Sample size & 146 & 942 & 146 & 927 \\ 
  \hline
  Baseline covariates &&&&\\
  Maternal education (yrs) & 4.92 (4.00) & 2.92 (3.96) & 3.54 (3.82) & 3.24 (4.10) \\ 
  Paternal education (yrs) & 5.94 (3.90) & 5.7 (4.30) & 5.79 (4.00) & 5.73 (4.28) \\ 
  Age (yrs) & 24.13 (3.37) & 24.97 (3.44) & 24.85 (3.56) & 24.83 (3.42) \\ 
  Han ethnicity & 0.92 (0.28) & 0.87 (0.35) & 0.91 (0.28) & 0.88 (0.32) \\ 
  Family anuual income (Chinese Yuan) & 37539.9 (39878.4) & 31437.3 (38706.5) & 34498.6 (32817.1) & 31936.8 (31243.0) \\ 
  Paternal age at birth (yrs) & 25.66 (4.59) & 27.66 (5.28) & 27.02 (5.20) & 27.27 (5.11) \\ 
  Maternal age at birth (yrs) & 24.08 (4.13) & 25.7 (4.79) & 24.93 (4.43) & 25.39 (4.68) \\ 
  Divorce & 0.01 (0.10) & 0.01 (0.10) & 0.01 (0.10) & 0.01 (0.10) \\ 
  \hline
  Outcomes &&&&\\
   Confidence & 4.07 (0.96) & 4.11 (0.93) & 4.06 (0.92) & 4.12 (0.94) \\ 
  Anxiety & 4.57 (0.75) & 4.57 (0.73) & 4.59 (0.77) & 4.58 (0.73) \\ 
  Desperation & 4.73 (0.59) & 4.72 (0.63) & 4.71 (0.59) & 4.72 (0.63) \\ 
  \hline
  \hline
  \textbf{Rural Females}\tnote{e} &&&&\\
  Sample size & 64 & 1004 & 62 & 950 \\ 
  \hline
  Baseline covariates &&&&\\
  Maternal education (yrs) & 4.48 (4.11) & 3.68 (4.05) & 3.55 (3.99) & 3.76 (4.07) \\ 
  Paternal education (yrs) & 6.05 (4.41) & 5.88 (4.34) & 5.83 (4.46) & 5.95 (4.29) \\ 
  Age (yrs) & 24.23 (3.27) & 24.77 (3.48) & 25.18 (3.63) & 24.9 (3.47) \\ 
  Han ethnicity & 0.92 (0.26) & 0.87 (0.33) & 0.91 (0.28) & 0.92 (0.26) \\ 
  Family anuual income (Chinese Yuan) & 43360.5 (59802.2) & 29620.9 (29073.6) & 28334.0 (25023.5) & 30437.5 (29195.6) \\ 
  Paternal age at birth (yrs) & 27.47 (5.22) & 27.40 (5.14) & 27.71 (5.40) & 27.39 (5.15) \\ 
  Maternal age at birth (yrs) & 24.97 (4.41) & 25.23 (4.60) & 25.35 (4.63) & 25.14 (4.54) \\ 
  Divorce & 0.03 (0.17) & 0.01 (0.10) & 0.01 (0.10) & 0.01 (0.10) \\ 
  \hline
  Outcomes &&&&\\
  Confidence & 3.86 (0.95) & 4.03 (0.95) & 3.77 (0.97) & 4.04 (0.95) \\ 
  Anxiety & 4.56 (0.68) & 4.6 (0.69) & 4.55 (0.65) & 4.60 (0.71) \\ 
  Desperation & 4.61 (0.68) & 4.71 (0.62) & 4.64 (0.64) & 4.71 (0.62) \\ 
  \hline
\end{tabular}}
\caption{Unweighted and weighted sample means (standard deviations) of baseline characteristics and responses in treatment and control groups of four strata}
\label{tab:Onechild_baselineDist_strata}
\begin{tablenotes}
\scriptsize
\item [a]P(only child) is modeled by a logistic regression with all baseline covariates, $\text{(maternal age at birth)}^3$, $\text{(paternal age at birth)}^3$, and $\text{(family income)}^3$.
\item [b]P(only child) is modeled by a logistic regression with all baseline covariates, $\text{(maternal age at birth)}^2$, and $\text{(family income)}^2$.
\item [c]P(only child) is modeled by a logistic regression with all baseline covariates, $\text{(maternal age at birth)}^2$, $\text{(paternal age at birth)}^2$, and $\text{(family income)}^2$.
\item [d]P(only child) is modeled by a logistic regression with all baseline covariates, $\text{(maternal age at birth)}^2$, $\text{(family income)}^2$, and divorce * family income.
\item [e]P(only child) is modeled by a logistic regression with all baseline covariates, $\text{(maternal age at birth)}^2$,  $\text{(family income)}^2$, and Han * age.
\end{tablenotes}
\end{threeparttable}
\end{table}

\begin{table}
\centering
\begin{tabular}{llll}
  \hline
Population & confidence & anxiety & desperation \\ 
  \hline
  All & 0.523 (0.486, 0.559) & 0.489 (0.464, 0.515) & 0.519 (0.495, 0.542) \\ 
Urban Males & 0.497 (0.457, 0.538) & 0.500 (0.465, 0.535) & \textbf{0.537 (0.505, 0.569)} \\ 
  Urban Females & \textbf{0.542 (0.503, 0.582)} & 0.480 (0.446, 0.514) & 0.526 (0.494, 0.559) \\ 
  Rural Males & 0.523 (0.471, 0.575) & 0.492 (0.447, 0.537) & 0.507 (0.466, 0.549) \\ 
  Rural Females & \textbf{0.581 (0.511, 0.651)} & 0.536 (0.472, 0.599) & 0.540 (0.481, 0.600) \\ 
   \hline
\end{tabular}
\caption{Adjusted Mann-Whitney test statistics (95\% CI) for different populations with respect to different response measures}
\label{tab:oneChild_MWtests}
\end{table}

We then split the data into four strata based on gender and region type. The sample sizes and distributions of baseline and response variables in the treatment and control groups are summarized in the left-hand sides of the second to fifth panels in Table \ref{tab:Onechild_baselineDist_strata}. It shows that the baseline characteristics vary among strata. For instance, urban parents have higher education levels and incomes than rural parents. The proportion of males are higher among only children than children with siblings, especially in rural areas. With respect to the response variables, there is no obvious difference among these subgroups. Adjusted Mann-Whitney tests are implemented in each strata separately based on the same weighting procedure described above. The baseline covariates are clearly better balanced in all strata. The adjusted Mann-Whitney test statistics and corresponding 95\% confidence intervals are listed in the second to fifth rows of Table \ref{tab:oneChild_MWtests}. Most tests show insignificant results except for testing desperation among urban males, and confidence among urban females and rural females. All these significant results suggest that only children's mental health is worse than children with siblings. Even these should be interpreted with caution given the large number of tests being carried out. The findings here are related but not exactly the same as those reported in  \cite{zeng2020being}, which found significantly negative treatment effects among both urban female and male strata for almost all responses (except for anxiety of urban females). It is worth noting that the statistical significant findings in both papers are close to the boundary of statistical insignificance, e.g., the confidence intervals of our significant tests and the credible intervals in \cite{zeng2020being} are very close to including the null value, $1/2$, in the intervals. 

Interpretation of the results here is challenging due to the number of strata and outcomes. A further challenge is that the results in Table \ref{tab:oneChild_MWtests} suggest similar results across strata in each column but with some attaining significance and others not. It is natural to ask whether these are significant differences across strata (see, e.g.,\cite{gelman2006difference}). The question can be addressed by assessing treatment effect heterogeneity among the four strata.  We implement our proposed adjusted U test and calculate  the test statistic by randomly selecting $M = 1000N$ $(N = 4030)$ kernel terms with replacement as described in Section \ref{sec:computationalChallenge}. The obtained p-values are respectively 0.142, 0.411 and 0.738, for the response variables confidence, anxiety, and desperation, which indicates that there is no significant treatment effect heterogeneity among the four subpopulations for each of the three outcomes. Pairwise tests among the four strata to examine treatment effect heterogeneity regarding the three response variables are also conducted, and the p-values of the 18 tests are almost uniformly distributed, which further demonstrates that there does not appear to be treatment effect heterogeneity across gender and region types.

\section{Discussion}
\label{seq:discussion}
In this paper, we propose a new nonparametric U test for heterogeneity of treatment effects in observational studies. Our method extends the U test in \cite{dai2020u} for randomized experiments to observational studies by adjusting for the confounding variables using propensity score modeling. Our approach is adaptive to various choices of target population, as long as the general function $h(x)$ used to define the target population is a constant or a differentiable function of propensity score. Many target populations of interest in practice satisfy this requirement, including subjects in treatment and control groups combined, treated subjects and subjects under control.

Compared to its parametric counterpart, the LRT, the proposed adjusted U test inherits the advantages of nonparametric tests in terms of weaker modeling assumptions on the outcome, and a significant improvement in power for non-normally distributed data as shown in the numerical studies.  Meanwhile, the desired property of model robustness from other propensity-score-based techniques \citep{dehejia1999causal} also holds for the adjusted U test when subject trimming is implemented. Compared to linear-regression-based approaches, our method is less sensitive to model misspecification. 

Several future working directions remain open. Firstly, we assume that for our method, all confounding variables are observed, which is untestable and may be subject to violation in practice. It will be of interest to conduct a sensitivity analysis to address this issue. Secondly, we assume there are no missing values of the confounding variables. Multiple imputation \citep{schafer1997analysis} can be used to resolve the issue if the values are missing at random. If they are missing not at random, it will be of interest to extend our work based on ideas from \cite{yang2019causal}. Thirdly, the calculation of U-statistics is based on a random sampling procedure over all pairwise comparison between strata for our method. Developing a more efficient sampling method for faster U-statistic computation will be an interesting future working direction. Fourthly, it will be of interest to extend our test statistic for high-dimensional covariates based on the results in \cite{he2021asymptotically}.




\bibhang=1.7pc
\bibsep=2pt
\fontsize{9}{14pt plus.8pt minus .6pt}\selectfont
\renewcommand\bibname{\large \bf References}
\expandafter\ifx\csname
natexlab\endcsname\relax\def\natexlab#1{#1}\fi
\expandafter\ifx\csname url\endcsname\relax
  \def\url#1{\texttt{#1}}\fi
\expandafter\ifx\csname urlprefix\endcsname\relax\def\urlprefix{URL}\fi

 \bibliographystyle{chicago}      
 \bibliography{Bibliography.bib}   

\begin{thebibliography}{}

\bibitem[\protect\citeauthoryear{Austin}{Austin}{2008}]{austin2008goodness}
Austin, P.~C. (2008).
\newblock Goodness-of-fit diagnostics for the propensity score model when
  estimating treatment effects using covariate adjustment with the propensity
  score.
\newblock {\em Pharmacoepidemiology and drug safety\/}~{\em 17\/}(12),
  1202--1217.

\bibitem[\protect\citeauthoryear{Bitler, Gelbach, and Hoynes}{Bitler
  et~al.}{2006}]{bitler2006mean}
Bitler, M.~P., J.~B. Gelbach, and H.~W. Hoynes (2006).
\newblock What mean impacts miss: Distributional effects of welfare reform
  experiments.
\newblock {\em American Economic Review\/}~{\em 96\/}(4), 988--1012.

\bibitem[\protect\citeauthoryear{Byar}{Byar}{1985}]{byar1985assessing}
Byar, D.~P. (1985).
\newblock Assessing apparent treatment—covariate interactions in randomized
  clinical trials.
\newblock {\em Statistics in Medicine\/}~{\em 4\/}(3), 255--263.

\bibitem[\protect\citeauthoryear{Chang, Lee, and Whang}{Chang
  et~al.}{2015}]{chang2015nonparametric}
Chang, M., S.~Lee, and Y.-J. Whang (2015).
\newblock Nonparametric tests of conditional treatment effects with an
  application to single-sex schooling on academic achievements.
\newblock {\em The Econometrics Journal\/}~{\em 18\/}(3), 307--346.

\bibitem[\protect\citeauthoryear{Crump, Hotz, Imbens, and Mitnik}{Crump
  et~al.}{2008}]{crump2008nonparametric}
Crump, R.~K., V.~J. Hotz, G.~W. Imbens, and O.~A. Mitnik (2008).
\newblock Nonparametric tests for treatment effect heterogeneity.
\newblock {\em The Review of Economics and Statistics\/}~{\em 90\/}(3),
  389--405.

\bibitem[\protect\citeauthoryear{Crump, Hotz, Imbens, and Mitnik}{Crump
  et~al.}{2009}]{crump2009dealing}
Crump, R.~K., V.~J. Hotz, G.~W. Imbens, and O.~A. Mitnik (2009).
\newblock Dealing with limited overlap in estimation of average treatment
  effects.
\newblock {\em Biometrika\/}~{\em 96\/}(1), 187--199.

\bibitem[\protect\citeauthoryear{Dai and Stern}{Dai and Stern}{2020}]{dai2020u}
Dai, M. and H.~S. Stern (2020).
\newblock A u-statistic-based test of treatment effect heterogeneity.
\newblock {\em arXiv preprint arXiv:2012.03432\/}.

\bibitem[\protect\citeauthoryear{Dehejia and Wahba}{Dehejia and
  Wahba}{1999}]{dehejia1999causal}
Dehejia, R.~H. and S.~Wahba (1999).
\newblock Causal effects in nonexperimental studies: Reevaluating the
  evaluation of training programs.
\newblock {\em Journal of the American Statistical Association\/}~{\em
  94\/}(448), 1053--1062.

\bibitem[\protect\citeauthoryear{Ding, Feller, and Miratrix}{Ding
  et~al.}{2016}]{ding2016randomization}
Ding, P., A.~Feller, and L.~Miratrix (2016).
\newblock Randomization inference for treatment effect variation.
\newblock {\em Journal of the Royal Statistical Society: Series B (Statistical
  Methodology)\/}~{\em 78\/}(3), 655--671.

\bibitem[\protect\citeauthoryear{Dunn}{Dunn}{1988}]{dunn1988sibling}
Dunn, J. (1988).
\newblock Sibling influences on childhood development.
\newblock {\em Journal of Child Psychology and Psychiatry\/}~{\em 29\/}(2),
  119--127.

\bibitem[\protect\citeauthoryear{Feller and Holmes}{Feller and
  Holmes}{2009}]{feller2009beyond}
Feller, A. and C.~C. Holmes (2009).
\newblock Beyond toplines: Heterogeneous treatment effects in randomized
  experiments.
\newblock {\em Unpublished manuscript, Oxford University\/}.

\bibitem[\protect\citeauthoryear{Gail and Simon}{Gail and
  Simon}{1985}]{gail1985testing}
Gail, M. and R.~Simon (1985).
\newblock Testing for qualitative interactions between treatment effects and
  patient subsets.
\newblock {\em Biometrics\/}~{\em 41\/}(2), 361--372.

\bibitem[\protect\citeauthoryear{Gelman and Stern}{Gelman and
  Stern}{2006}]{gelman2006difference}
Gelman, A. and H.~Stern (2006).
\newblock The difference between “significant” and “not significant” is
  not itself statistically significant.
\newblock {\em The American Statistician\/}~{\em 60\/}(4), 328--331.

\bibitem[\protect\citeauthoryear{Ginsburg and Willard}{Ginsburg and
  Willard}{2009}]{ginsburg2009genomic}
Ginsburg, G.~S. and H.~F. Willard (2009).
\newblock Genomic and personalized medicine: foundations and applications.
\newblock {\em Translational research\/}~{\em 154\/}(6), 277--287.

\bibitem[\protect\citeauthoryear{H{\'a}jek}{H{\'a}jek}{1968}]{hajek1968asymptotic}
H{\'a}jek, J. (1968).
\newblock Asymptotic normality of simple linear rank statistics under
  alternatives.
\newblock {\em The Annals of Mathematical Statistics\/}, 325--346.

\bibitem[\protect\citeauthoryear{He, Xu, Wu, Pan, et~al.}{He
  et~al.}{2021}]{he2021asymptotically}
He, Y., G.~Xu, C.~Wu, W.~Pan, et~al. (2021).
\newblock Asymptotically independent u-statistics in high-dimensional testing.
\newblock {\em Annals of Statistics\/}~{\em 49\/}(1), 154--181.

\bibitem[\protect\citeauthoryear{Hesketh and Zhu}{Hesketh and
  Zhu}{1997}]{hesketh1997one}
Hesketh, T. and W.~Zhu (1997).
\newblock The one child family policy: the good, the bad, and the ugly; health
  in china, part 3.
\newblock {\em British Medical Journal\/}~{\em 314}, 1685--1692.

\bibitem[\protect\citeauthoryear{Horvitz and Thompson}{Horvitz and
  Thompson}{1952}]{horvitz1952generalization}
Horvitz, D.~G. and D.~J. Thompson (1952).
\newblock A generalization of sampling without replacement from a finite
  universe.
\newblock {\em Journal of the American Statistical Association\/}~{\em
  47\/}(260), 663--685.

\bibitem[\protect\citeauthoryear{Hsu}{Hsu}{2017}]{hsu2017consistent}
Hsu, Y.-C. (2017).
\newblock Consistent tests for conditional treatment effects.
\newblock {\em The Econometrics Journal\/}~{\em 20\/}(1), 1--22.

\bibitem[\protect\citeauthoryear{Imbens and Rubin}{Imbens and
  Rubin}{2015}]{imbens2015causal}
Imbens, G.~W. and D.~B. Rubin (2015).
\newblock {\em Causal Inference in Statistics, Social, and Biomedical
  Sciences}.
\newblock Cambridge University Press.

\bibitem[\protect\citeauthoryear{Kent, Nelson, Dahabreh, Rothwell, Altman, and
  Hayward}{Kent et~al.}{2016}]{kent2016risk}
Kent, D.~M., J.~Nelson, I.~J. Dahabreh, P.~M. Rothwell, D.~G. Altman, and R.~A.
  Hayward (2016).
\newblock Risk and treatment effect heterogeneity: re-analysis of individual
  participant data from 32 large clinical trials.
\newblock {\em International journal of epidemiology\/}~{\em 45\/}(6),
  2075--2088.

\bibitem[\protect\citeauthoryear{Korolyuk and Borovskich}{Korolyuk and
  Borovskich}{2013}]{korolyuk2013theory}
Korolyuk, V.~S. and Y.~V. Borovskich (2013).
\newblock {\em Theory of U-statistics}, Volume 273.
\newblock Springer Science \& Business Media.

\bibitem[\protect\citeauthoryear{Krishnan, Sokka, and Hannonen}{Krishnan
  et~al.}{2003}]{krishnan2003smoking}
Krishnan, E., T.~Sokka, and P.~Hannonen (2003).
\newblock Smoking--gender interaction and risk for rheumatoid arthritis.
\newblock {\em Arthritis Res Ther\/}~{\em 5\/}(3), R158.

\bibitem[\protect\citeauthoryear{LaLonde}{LaLonde}{1986}]{lalonde1986evaluating}
LaLonde, R.~J. (1986).
\newblock Evaluating the econometric evaluations of training programs with
  experimental data.
\newblock {\em The American Economic Review\/}, 604--620.

\bibitem[\protect\citeauthoryear{Li, Morgan, and Zaslavsky}{Li
  et~al.}{2018}]{li2018balancing}
Li, F., K.~L. Morgan, and A.~M. Zaslavsky (2018).
\newblock Balancing covariates via propensity score weighting.
\newblock {\em Journal of the American Statistical Association\/}~{\em
  113\/}(521), 390--400.

\bibitem[\protect\citeauthoryear{McHale, Updegraff, and Whiteman}{McHale
  et~al.}{2012}]{mchale2012sibling}
McHale, S.~M., K.~A. Updegraff, and S.~D. Whiteman (2012).
\newblock Sibling relationships and influences in childhood and adolescence.
\newblock {\em Journal of Marriage and Family\/}~{\em 74\/}(5), 913--930.

\bibitem[\protect\citeauthoryear{Na, Loughran, and Paternoster}{Na
  et~al.}{2015}]{na2015importance}
Na, C., T.~A. Loughran, and R.~Paternoster (2015).
\newblock On the importance of treatment effect heterogeneity in
  experimentally-evaluated criminal justice interventions.
\newblock {\em Journal of Quantitative Criminology\/}~{\em 31\/}(2), 289--310.

\bibitem[\protect\citeauthoryear{Pate and Hamilton}{Pate and
  Hamilton}{1992}]{pate1992formal}
Pate, A.~M. and E.~E. Hamilton (1992).
\newblock Formal and informal deterrents to domestic violence: The dade county
  spouse assault experiment.
\newblock {\em American Sociological Review\/}, 691--697.

\bibitem[\protect\citeauthoryear{Rosenbaum and Rubin}{Rosenbaum and
  Rubin}{1983}]{rosenbaum1983central}
Rosenbaum, P.~R. and D.~B. Rubin (1983).
\newblock The central role of the propensity score in observational studies for
  causal effects.
\newblock {\em Biometrika\/}~{\em 70\/}(1), 41--55.

\bibitem[\protect\citeauthoryear{Sant’Anna}{Sant’Anna}{2020}]{sant2016nonparametric}
Sant’Anna, P.~H. (2020).
\newblock Nonparametric tests for treatment effect heterogeneity with duration
  outcomes.
\newblock {\em Journal of Business \& Economic Statistics\/}, 1--17.

\bibitem[\protect\citeauthoryear{Satten, Kong, and Datta}{Satten
  et~al.}{2018}]{satten2018multisample}
Satten, G.~A., M.~Kong, and S.~Datta (2018).
\newblock Multisample adjusted u-statistics that account for confounding
  covariates.
\newblock {\em Statistics in Medicine\/}~{\em 37\/}(23), 3357--3372.

\bibitem[\protect\citeauthoryear{Schafer}{Schafer}{1997}]{schafer1997analysis}
Schafer, J.~L. (1997).
\newblock {\em Analysis of Incomplete Multivariate Data}.
\newblock CRC press.

\bibitem[\protect\citeauthoryear{Van~der Vaart}{Van~der
  Vaart}{2000}]{van2000asymptotic}
Van~der Vaart, A.~W. (2000).
\newblock {\em Asymptotic statistics}, Volume~3.
\newblock Cambridge university press.

\bibitem[\protect\citeauthoryear{Vegetabile, Gillen, and Stern}{Vegetabile
  et~al.}{2020}]{vegetabile2020optimally}
Vegetabile, B.~G., D.~L. Gillen, and H.~S. Stern (2020).
\newblock Optimally balanced gaussian process propensity scores for estimating
  treatment effects.
\newblock {\em Journal of the Royal Statistical Society: Series A (Statistics
  in Society)\/}~{\em 183\/}(1), 355--377.

\bibitem[\protect\citeauthoryear{Wu}{Wu}{2014}]{wu2014only}
Wu, L. (2014).
\newblock Are only children worse off on subjective well-being?: evidence from
  china's one-child policy.
\newblock Master's thesis, Hong Kong University of Science and Technology.

\bibitem[\protect\citeauthoryear{Xie, Brand, and Jann}{Xie
  et~al.}{2012}]{xie2012estimating}
Xie, Y., J.~E. Brand, and B.~Jann (2012).
\newblock Estimating heterogeneous treatment effects with observational data.
\newblock {\em Sociological Methodology\/}~{\em 42\/}(1), 314--347.

\bibitem[\protect\citeauthoryear{Xie and Hu}{Xie and
  Hu}{2014}]{xie2014introduction}
Xie, Y. and J.~Hu (2014).
\newblock An introduction to the china family panel studies (cfps).
\newblock {\em Chinese Sociological Review\/}~{\em 47\/}(1), 3--29.

\bibitem[\protect\citeauthoryear{Yang, Wang, and Ding}{Yang
  et~al.}{2019}]{yang2019causal}
Yang, S., L.~Wang, and P.~Ding (2019).
\newblock Causal inference with confounders missing not at random.
\newblock {\em Biometrika\/}~{\em 106\/}(4), 875--888.

\bibitem[\protect\citeauthoryear{Zeng, Li, and Ding}{Zeng
  et~al.}{2020}]{zeng2020being}
Zeng, S., F.~Li, and P.~Ding (2020).
\newblock Is being an only child harmful to psychological health?: evidence
  from an instrumental variable analysis of china's one-child policy.
\newblock {\em Journal of the Royal Statistical Society: Series A (Statistics
  in Society)\/}~{\em 183\/}(4), 1615--1635.

\end{thebibliography}











\newpage

\setcounter{page}{1}
\renewcommand{\baselinestretch}{2}
\renewcommand{\thetable}{S\arabic{table}}  
\renewcommand{\thefigure}{S\arabic{figure}}
\renewcommand{\thesection}{S\arabic{section}}


\markboth{\hfill{\footnotesize\rm Maozhu Dai, Weining Shen and Hal S. Stern} \hfill}

\renewcommand{\thefootnote}{}
$\ $\par \fontsize{12}{14pt plus.8pt minus .6pt}\selectfont


\centerline{\Large\bf Supplementary Material for ``Nonparametric Tests for }
\vspace{2pt}
 \centerline{\Large\bf   Treatment Effect Heterogeneity in Observational Studies''}
\vspace{.25cm}
 \centerline{Maozhu Dai, Weining Shen, Hal S. Stern}
\vspace{.4cm}
 \centerline{\it Department of Statistics, University of California, Irvine}
\vspace{.55cm}
\fontsize{9}{11.5pt plus.8pt minus .6pt}\selectfont
\noindent
This supplementary material includes additional plots and tables for the simulations in Section \ref{sec:Simulation} and the data application examples in Section \ref{sec:caseStudy}.
\par
\setcounter{figure}{0} 
\setcounter{table}{0} 
\setcounter{section}{0}

\section{Average number of trimmed subjects for simulations in Section \ref{sec:Simulation}}
\begin{table}[ht]
\centering
\begin{tabular}{|c|c|c|c|c|c|}
  \hline
  \multicolumn{2}{|c|}{Stratum 1} & \multicolumn{2}{c|}{Stratum 2} & \multicolumn{2}{c|}{Stratum 3}\\
  \hline
  Treatment & Control & Treatment & Control & Treatment & Control \\ 
  \hline
  7.11 & 7.09 & 7.30 & 7.14 & 1.64 & 1.62 \\ 
   \hline
\end{tabular}
\caption{Validity: average number of removed subjects in each subgroup for trimmed U test.}
\label{tab:Valid_NTrimmed}
\end{table}

\begin{figure}[ht]
    \centering
    \includegraphics[width = 3in]{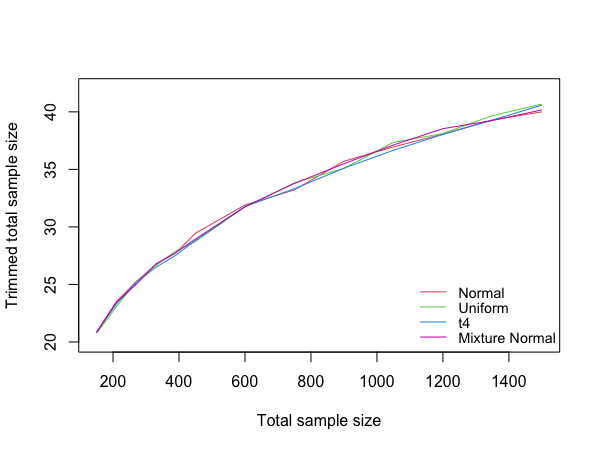}
    \caption{Power analysis: average number of trimmed subjects for four error distributions based on 2000 Monte-Carlo replications.}
    \label{fig:PowerVsN_NTrimmed}
\end{figure}

\begin{table}[H]
\centering
\scalebox{0.8}{
\begin{tabular}{|c|c|c|c|c|c|c|}
  \hline
  &\multicolumn{2}{c|}{Stratum 1} & \multicolumn{2}{c|}{Stratum 2} & \multicolumn{2}{c|}{Stratum 3}\\
  \hline
 & Treatment & Control & Treatment & Control & Treatment & Control \\ 
  \hline
  $\mathcal{N}(0,1)$ & 9.00 & 0.21 & 0.23 & 9.12 & 2.21 & 1.18 \\ 
  $U(-2,2)$ & 8.95 & 0.23 & 0.20 & 8.95 & 2.16 & 1.17 \\ 
  $t_4$ & 8.85 & 0.22 & 0.23 & 9.33 & 2.18 & 1.20 \\ 
  $0.5\mathcal{N}(-5,1)+0.5\mathcal{N}(5,1)$ & 9.03 & 0.19 & 0.21 & 8.99 & 2.20 & 1.19 \\
   \hline
\end{tabular}}
\caption{Sensitivity analysis: average number of trimmed subjects for each stratum by trimmed U test based on 2000 Monte-Carlo replications. }
\label{tab:sensitivity_NTrimmed}
\end{table}

\section{Outcome distributions of applications in Section \ref{sec:caseStudy}}
\label{sec:appendix_figures}
\begin{figure}[ht]
    \centering
    \includegraphics[width = 2.7in]{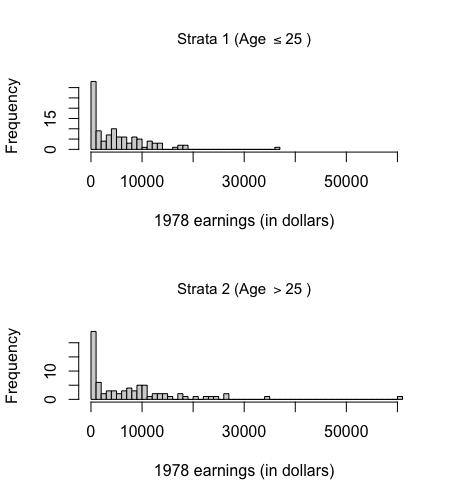}
    \caption{Distribution of earnings in 1974 for participants in the treatment group.}
    \label{fig:CaseStudyOutcomeDist}
\end{figure}

\begin{figure}[ht]
    \centering
    \includegraphics[width = 4in]{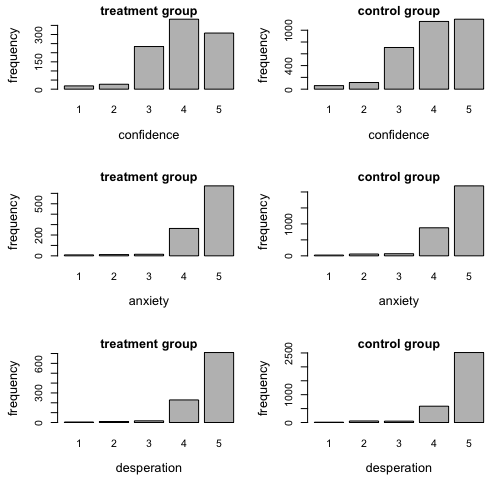}
    \caption{Distributions of confidence, anxiety and desperation measures in the treatment and control groups.}
    \label{fig:Onechild_responseDist}
\end{figure}

\bibhang=1.7pc
\bibsep=2pt
\fontsize{9}{14pt plus.8pt minus .6pt}\selectfont
\renewcommand\bibname{\large \bf References}
\expandafter\ifx\csname
natexlab\endcsname\relax\def\natexlab#1{#1}\fi
\expandafter\ifx\csname url\endcsname\relax
  \def\url#1{\texttt{#1}}\fi
\expandafter\ifx\csname urlprefix\endcsname\relax\def\urlprefix{URL}\fi

\end{document}